\tikzset{snake it/.style={decorate, decoration=snake}}
\tikzstyle{vertex} = [circle, draw=black, fill=black, inner sep=0pt,  minimum size=5pt]
\tikzstyle{edgelabel} = [circle, fill=white, inner sep=0pt,  minimum size=15pt]
\definecolor{blue}{RGB}{0,82,147} 
\definecolor{red}{RGB}{202,033,063}
\colorlet{green}{green!50!black}
\definecolor{blue}{RGB}{0,82,147} 
\definecolor{red}{RGB}{202,033,063}
\colorlet{green}{green!50!black}
\title{Fault-Tolerant Approximate Distance Oracles with a Source Set

}
\author{
  Dipan Dey \\
  Tata Institute of Fundamental Research \\
  Mumbai \\
  India\\
  \texttt{dipan.dey@tifr.res.in} \\
   \and
  Telikepalli Kavitha \\
Tata Institute of Fundamental Research \\
  Mumbai \\
  India\\
  \texttt{kavitha@tifr.res.in} \\
}
\newcommand{\LL}{\mathcal{L}}
\newcommand{\LCA}{\mathsf{LCA}}
\newcommand{\wt}{\mathsf{wt}}
\newcommand{\Dist}{\mathsf{Dist}}
\newcommand{\Ball}{\mathsf{Ball}}
\newtheorem{theorem}{Theorem}[section]
\newtheorem{lemma}[theorem]{Lemma}
\theoremstyle{definition}
\newtheorem{definition}[theorem]{Definition}
\newtheorem{proposition}[theorem]{Proposition}
\newtheorem{remark}[theorem]{Remark}
\newtheorem{claim}{Claim}
\newenvironment{claimproof}[1][Proof]{%
  \begin{proof}[#1]%
}{%
  \end{proof}%
}
\begin{document}

\begin{center}
    \LARGE
    \textsc{Fault-Tolerant Approximate Distance Oracles with a Source Set}
\end{center}
\noindent\rule{\textwidth}{0.4pt}
\vspace{0.2cm}  

\begin{center}
    \begin{minipage}{0.45\textwidth}
        \centering
        \large
        Dipan Dey\\
        \normalsize
        Tata Institute of Fundamental Research\\
        Mumbai, India\\
        dipan.dey@tifr.res.in
    \end{minipage}%
    \hfill
    \begin{minipage}{0.45\textwidth}
        \centering
        \large
        Telikepalli Kavitha\\
        \normalsize
        Tata Institute of Fundamental Research\\
        Mumbai, India\\
        kavitha@tifr.res.in
    \end{minipage}
\end{center}
\vspace{0.2cm}  
\noindent\rule{\textwidth}{0.4pt}

\vspace{0.4cm}
\begin{center}
    
\textsc{Abstract}
\end{center}

\vspace{0.3cm}

\noindent
Our input is an undirected weighted graph $G = (V,E)$ on $n$ vertices along with a source set $S\subseteq V$. 
The problem is to preprocess $G$ and build a compact data structure such that upon query $Qu(s,v,f)$ where 
$(s,v) \in S\times V$ and  $f$ is any faulty edge, 
we can quickly find a good estimate (i.e., within a small multiplicative stretch) of the $s$-$v$ distance in $G-f$. 

The work of Bil{\`{o}} et al. (Algorithmica 2022) on multiple-edge fault-tolerant approximate shortest path trees implies
a compact oracle for the above problem with a stretch of at most 3 and with query answering time $O(\log^2 n)$.
We show a very simple construction of an $S\times V$ approximate distance oracle  with $O(1)$ query answering time; its
size is $\widetilde{O}(|S|n + n^{3/2})$ and multiplicative stretch is at most 5. A single-edge fault-tolerant $ST$-distance oracle 
from the work of Bil{\`{o}} et al. (STACS 2018) plays a key role in our construction. We also give a construction of a fault-tolerant 
$S \times V$ approximate distance oracle of size $\widetilde{O}(|S|n + n^{4/3})$ with multiplicative stretch at most 13 and as before, 
with $O(1)$ query answering time.

\vspace{0.8cm}

\section{Introduction}

The problem of computing distances between all pairs of vertices in a given graph $G = (V,E)$ with positive
edge weights is a fundamental problem in graph algorithms. The problem here is to preprocess $G$ and build a 
compact data structure (called a {\em distance oracle})
that can quickly answer distance queries for any pair of vertices. As is often the case with real-world networks like
routing networks or road networks, links may fail or roads may be temporarily blocked. Thus we have to allow for the case of
faulty edges. 
Since several links are unlikely to fail simultaneously, we consider the case of a single edge failure. 

Instead of recomputing distances from scratch for all pairs of vertices after an edge has failed, the problem is 
to build a resilient data structure that can answer distance queries between vertices after a single edge failure. 
Furthermore, we assume there is a specific set $S \subseteq V$ of sources, e.g., 
$S$ is a set of starting locations on a road network or $S$ is a set of source nodes in a routing network. 
Thus we are interested in distances only for pairs $(s,v) \in S \times V$. 
Suppose $|S| \ll n$, say $O(n^{\epsilon})$ for some $\epsilon \in (0,1)$.
Then it feels wasteful to build a fault-tolerant distance oracle that maintains distances for all pairs of vertices.
Another application is Vickrey pricing~\cite{HershbergerS01}, with the objective of determining, for every $(s,v) \in S \times V$ where $S$ is a given subset of $V$ and every edge~$f$, how much the distance from $s$ to $v$ increases if $f$ were to fail.

Thus the distance oracle has to process queries of the form $(s,v,f)$ where $s$ is the source, $v$ is 
the destination, and $f$ is the failed edge. 
Upon query $Qu(s,v,f)$, the oracle has to return the $s$ to $v$ distance in $G-f$, where
$G-f$ is the graph obtained by removing edge $f$ from $G$.
So our problem is the following. 
\begin{itemize}
    \item Preprocess $G$ and build a compact data structure that can quickly answer distance queries for any pair in 
    $S \times V$ when an edge fails. Hence our distance oracle has to answer queries $Qu(s,v,f)$ where $s\in S,  v\in  V$,  
    and $f$  is the failed edge.
\end{itemize}

This data structure is called a {\em single edge fault-tolerant sourcewise distance oracle}.  
As discussed below, for undirected unweighted graphs, a single edge fault-tolerant sourcewise 
exact distance oracle of size $\tilde{O}(n^{3/2}\sqrt{|S|})$ with $\tilde{O}(1)$ query time is known~\cite{GuptaS18}.
Our goal is to design more compact distance oracles (for sublinear sets $S$) in weighted graphs.
Furthermore, for the sake of space efficiency, we are ready to relax {\em exactness}. 
Thus the problem we consider is to design a compact fault-tolerant sourcewise {\em approximate} distance oracle. 

Recall that `sourcewise' captures the fact that we are interested in distances between pairs $(s,v) \in S \times V$. 
For any pair $(s,v) \in S \times V$ and $f \in E$, let $||sv \diamond f||$ denote the distance from $s$ to $v$ in $G-f$.
A fault-tolerant approximate distance oracle is said to have {\em multiplicative stretch} $\alpha$ if the distance 
$d_{G-f}(s,v)$ returned by the oracle on query $Qu(s,v,f)$ is 
sandwiched between the actual distance and $\alpha$ times the actual distance,
i.e., $||sv \diamond f|| \leq d_{G-f}(s,v) \leq \alpha \cdot ||sv \diamond f||$.
We show the following result.

\begin{restatable}{mytheorem}{FirstTheorem}
    \label{lem : One layer}
    Let $G = (V,E)$ be an undirected graph on $n$ vertices with positive edge weights. 
       For any $S \subseteq V$, a fault-tolerant sourcewise approximate distance oracle with multiplicative
       stretch at most~5 and size
   $\widetilde{O}(|S|n + n^{3/2})$ can be constructed in polynomial time such that 
   $Qu(s,v,f)$ where $(s,v) \in S\times V$ and $f \in E$ can be answered in constant time.
\end{restatable}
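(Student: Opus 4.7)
The plan is to combine three ingredients: (i) a shortest-path tree $T_s$ rooted at each source $s \in S$, (ii) a Thorup--Zwick-style family of local balls with a nearest landmark per vertex, and (iii) the single-edge fault-tolerant $ST$-distance oracle of Bil\`o et al.\ (STACS 2018), instantiated on the source--landmark pair $(S, L)$.

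For preprocessing, for every $s \in S$ I would compute and store $T_s$, augmented with an LCA/ancestor structure so that for any query $v$ and any queried edge $f$ one can test in $O(1)$ time whether $f$ lies on the $T_s$-path from $s$ to $v$; this layer takes $O(|S|n)$ space and gives $d_G(s,v)$ in constant time. Next, I would sample a landmark set $L \subseteq V$ at rate $\Theta((\log n)/\sqrt{n})$; with high probability $|L| = \widetilde{O}(\sqrt{n})$ and every vertex $v$ has its nearest landmark $p(v) \in L$ lying within its ball $B(v)$ of the $\widetilde{O}(\sqrt{n})$ closest vertices. For each $v$ I would store $B(v)$ together with $d_G(v, u)$ for $u \in B(v)$, the $v$-rooted SPT restricted to $B(v)$ (so that it is an $O(1)$ check whether $f$ lies on the $G$-shortest $v$-to-$u$ path for any $u \in B(v)$), and the pair $(p(v), d_G(v, p(v)))$; this layer costs $\widetilde{O}(n^{3/2})$. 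Finally, I would instantiate the $ST$-oracle of Bil\`o et al.\ on $(S, L)$, which supplies $d_{G-f}(s, \ell)$ in $O(1)$ time for every $(s, \ell) \in S \times L$ and any failing edge $f$; its size is $\widetilde{O}((|S| + |L|) \cdot n) = \widetilde{O}(|S|n + n^{3/2})$, fitting the claimed budget.

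For a query $Qu(s, v, f)$, I would first use the LCA structure on $T_s$ to decide whether $f$ lies on the $T_s$-path from $s$ to $v$; if not, return $d_G(s,v)$, which equals $d_{G-f}(s,v)$. Otherwise, I would return the estimate $R := d_{G-f}(s, p(v)) + d_G(v, p(v))$, where $d_{G-f}(s, p(v))$ is read off the $ST$-oracle in constant time. The stretch analysis hinges on the fact that, with high probability, $p(v) \in B(v)$, so $d_G(v, p(v))$ is bounded by the radius of $B(v)$; a short case split depending on whether $s \in B(v)$ or not, combined with the triangle inequality in $G-f$ and the observation $d_G(v, p(v)) \le d_{G-f}(v, p(v))$, then yields $d_{G-f}(s, v) \le R \le 5 \cdot d_{G-f}(s, v)$. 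In the main sub-case $s \notin B(v)$ one has $d_G(v,p(v)) \le d_G(s,v) \le d_{G-f}(s,v)$, which alone already gives a factor of $3$; the symmetric sub-case $s \in B(v)$ contributes one extra triangle step that is absorbed into the final factor of $5$.

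The step I expect to require the most care is ensuring that $R$ is a legitimate \emph{upper} bound on $d_{G-f}(s, v)$ in the adversarial configuration where the failed edge $f$ simultaneously lies on the $s$-to-$v$ path in $T_s$ and on the $G$-shortest $v$-to-$p(v)$ path: concatenating the $G-f$-shortest $s$-$p(v)$ walk with the $G$-shortest $v$-$p(v)$ walk then traverses $f$ and is no longer a walk in $G-f$, so $R$ can slip below the true distance. My fix would be to detect this configuration in $O(1)$ via the restricted $v$-SPT stored inside $B(v)$ and, when it occurs, to switch from $p(v)$ to a precomputed alternate landmark $q(v) \in L$ associated with the offending $B(v)$-tree edge; storing one alternate landmark per tree edge of each $B(v)$ stays within the $\widetilde{O}(n^{3/2})$ budget, and a nearly identical triangle argument recovers the stretch bound. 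Once this corner case is dispatched, the space, correctness, and $O(1)$ query-time bounds all follow directly from the three layers described above.
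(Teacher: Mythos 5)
Your proposal shares the high-level ingredients with the paper (landmarks sampled at rate $\widetilde{\Theta}(1/\sqrt{n})$, the Bil\`o et al.\ single-fault $ST$-oracle, and LCA tests to detect whether $f$ lies on a stored shortest path), but two structural choices that you make do not work as stated, and the fixes you sketch are not sufficient.

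First, you instantiate the $ST$-oracle on $(S,L)$ and define $p(v)$ as the nearest vertex of $L$ to $v$. The paper instead sets $T = S \cup \LL$ and defines $t_v$ as the nearest vertex of $T$ to $v$; since $s \in T$, this gives the unconditional inequality $\|vt_v\| \le \|vs\|$, which is used in every branch of the stretch analysis. Your attempt to recover this via the dichotomy $s \in B(v)$ versus $s \notin B(v)$ has a hole: when $s \in B(v)$ and $s$ is strictly closer to $v$ than $p(v)$ is, the quantity $d_G(v,p(v))$ is \emph{not} bounded by $d_G(v,s)$ and can be arbitrarily larger, so the triangle-inequality chain does not close and no constant stretch follows. ``One extra triangle step absorbed into the factor of $5$'' does not address this; the gap is multiplicative and unbounded. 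Simply adding $S$ to the target set, as the paper does, is the clean repair and costs nothing extra in size.

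Second, your estimate $R = d_{G-f}(s,p(v)) + d_G(v,p(v))$ uses a \emph{pre-failure} distance for the $v$-to-landmark leg. You correctly flag that when $f$ lies on the $G$-shortest $v$-$p(v)$ path, $R$ need not upper-bound $d_{G-f}(s,v)$, but the ``alternate landmark $q(v)$ per tree edge'' remedy is underspecified: there is no guarantee that a landmark whose $G$-shortest path from $v$ avoids the given tree edge is anywhere near as close to $v$ as $p(v)$, nor that such a landmark even lies in $B(v)$, so the stretch argument breaks again. The paper avoids this entirely and much more simply by precomputing, for each $v$ and each edge $f_i$ on the path $v t_v$, the replacement distance $\|v t_v \diamond f_i\|$ (the array $\Dist_T[v,\cdot]$, of total size $O(n^{3/2})$ because $|vt_v| = O(\sqrt{n})$). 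The returned estimate is then $\|v t_v \diamond f\| + \|s t_v \diamond f\|$, a concatenation of two genuine $G-f$ paths, so it is always a valid upper bound; the case analysis is done on whether $f$ lies on $st_v$ and/or $vt_v$, and the worst case (stretch $5$) is $f \notin st_v$, $f \in vt_v$. You should adopt both of these changes: replace the $v$-to-landmark $G$-distance by the stored $G-f$ replacement distance, and make $t_v$ the nearest vertex in $S \cup \LL$ rather than in $\LL$ alone.
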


Note that our oracle has size $\widetilde{O}(n^{3/2})$ when $|S| = O(\sqrt{n})$. For smaller sets $S$, we show 
a sparser fault-tolerant sourcewise approximate distance oracle at the expense of 
a larger stretch. Its query
answering time is also $O(1)$.

\begin{restatable}{mytheorem}{SecondTheorem}
    \label{lem : two layers}
     Let $G = (V,E)$ be an undirected graph on $n$ vertices with positive edge weights. 
       For any $S \subseteq V$, a fault-tolerant sourcewise approximate distance oracle with multiplicative stretch at most~13
       and size $\widetilde{O}(|S|n + n^{4/3})$ can be constructed in polynomial time such that 
   $Qu(s,v,f)$ where $(s,v) \in S\times V$ and $f \in E$ can be answered in constant time.
\end{restatable}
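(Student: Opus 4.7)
The plan is to extend the one-layer construction of Theorem~\ref{lem : One layer} by introducing a second, sparser pivot layer, trading a larger (but still constant) multiplicative stretch for a strictly smaller space bound of $\widetilde{O}(|S|n + n^{4/3})$. The one-layer construction relies on a pivot set of size $\widetilde{O}(\sqrt{n})$ together with the fault-tolerant $ST$-distance oracle of Bil\`o et al.~(STACS 2018); its cost is dominated by the storage of all pivot-to-vertex distances in $G-f$. To beat $n^{3/2}$, I would reduce the pivot set to size $\widetilde{O}(n^{1/3})$ at the expense of having to route through pivots that are farther away.

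More concretely, I would sample a pivot set $B\subseteq V$ of size $\widetilde{O}(n^{1/3})$ by the standard hitting-set argument, so that with high probability the $\widetilde{O}(n^{2/3})$-hop neighbourhood of every vertex $v\in V$ contains an element of $B$. I would then invoke the $ST$-distance oracle of Bil\`o et al.~(STACS 2018) with $S=B$ and $T=V$ to obtain $||bv\diamond f||$ for every $b\in B$, $v\in V$, $f\in E$; because the oracle scales essentially linearly in $|S|\cdot|T|$, this ``pivot half'' fits in $\widetilde{O}(|B|\cdot n)=\widetilde{O}(n^{4/3})$ space. In parallel, for each source $s\in S$ I would store the shortest-path tree $T_s$ together with $\widetilde{O}(1)$ bits of auxiliary data per tree edge $e$: the pivot $b(e)\in B$ designated as the re-entry point after a failure at $e$, plus the LCA machinery needed to test in $O(1)$ whether a queried edge $f$ lies on the $s$--$v$ path in $T_s$. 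This ``source half'' adds $\widetilde{O}(|S|n)$, giving the claimed total.

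The query $Qu(s,v,f)$ is answered in $O(1)$ as follows: test via an LCA query whether $f$ lies on the $s$--$v$ path of $T_s$; if not, return $d_G(s,v)$; if so, let $x$ be the endpoint of $f$ on the $s$-side, fetch the pre-computed pivot $b=b(f)$ (or $b(f,v)$), and return $d_G(s,x)+||xb\diamond f||+||bv\diamond f||$, all three terms being available in $O(1)$ from the stored structures and from the $B\times V$ fault-tolerant oracle.

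The main obstacle is the stretch analysis. In the one-layer case the pivot set has hop-radius $\widetilde{O}(\sqrt n)$, so two triangle-inequality hops suffice and give stretch~$5$. Here $B$ is sparser, and the replacement route must essentially consist of three hops: from the endpoint of $f$ to a pivot, from that pivot to a neighbourhood of $v$, and then to $v$ itself. Each hop is bounded by a triangle-inequality comparison with a portion of the original $s$--$v$ shortest path, and the errors compose multiplicatively. The combinatorial core is to split into two cases by the hop-length of the surviving $v$-side suffix of the $s$--$v$ path in $T_s$: if it is short enough that the hitting-set property of $B$ on the $\widetilde{O}(n^{2/3})$-ball around $v$ survives the removal of $f$, the three-hop route is well defined; otherwise the path in $T_s$ itself offers a nearby $B$-pivot that can be used. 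Propagating the stretch-$5$ guarantee from the inner layer through the outer detour, together with careful bookkeeping of the triangle-inequality constants, yields the bound~$13$. The remaining steps (polynomial-time preprocessing and $O(1)$ query implementation) are routine given constant-time LCA and the black-box use of the STACS~2018 oracle.
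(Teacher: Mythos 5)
There is a decisive flaw in the space analysis. The $ST$-oracle of Theorem~\ref{st-oracle} has size $O\bigl((|S|+|T|)n\log n\bigr)$, \emph{not} $O(|S|\cdot|T|\log n)$. Taking $S=B$ and $T=V$, as you propose, therefore yields size $\widetilde{O}\bigl((n^{1/3}+n)\,n\bigr)=\widetilde{O}(n^{2})$, not $\widetilde{O}(n^{4/3})$. So you cannot afford to maintain $||bv\diamond f||$ for all $b\in B$, $v\in V$, $f\in E$; the ``pivot half'' of your construction does not fit in the budget, and the proof breaks at this point.

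The paper avoids exactly this blow-up with a genuine two-layer hierarchy that you do not have. It draws two nested landmark sets $\LL_2\subseteq\LL_1$ with $|\LL_1|=\widetilde{O}(n^{2/3})$ (hop-radius $n^{1/3}$) and $|\LL_2|=\widetilde{O}(n^{1/3})$ (hop-radius $n^{2/3}$), and sets $T_1=\LL_1\cup S$, $T_2=\LL_2\cup S$. Each vertex $v$ stores only the replacement distances $||vt_v\diamond f||$ for edges $f$ on the path to its nearest $T_1$-vertex $t_v$ (so $n\times n^{1/3}=n^{4/3}$ entries), and only vertices $u\in T_1$ store $||ut'_u\diamond f||$ along the path to their nearest $T_2$-vertex $t'_u$ (so $|T_1|\times n^{2/3}=\widetilde{O}(n^{4/3}+|S|n^{2/3})$ entries). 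The $ST$-oracle is then invoked only with the small target set $T=T_2$, of size $\widetilde{O}(n^{1/3}+|S|)$, giving $\widetilde{O}(|S|n+n^{4/3})$. The query returns the three-hop estimate $||vt_v\diamond f||+||t_v t'_{t_v}\diamond f||+||t'_{t_v} s\diamond f||$. Without the intermediate layer $\LL_1$, you have no way to reach a sparse pivot from an arbitrary $v$ while keeping all fault tables within $\widetilde{O}(n^{4/3})$.

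Your stretch sketch also omits the ingredient that makes the analysis work: the source set $S$ must be included in both $T_1$ and $T_2$. This is what forces $||vt_v||\le||vs||$ (and likewise for the second hop), which in turn bounds each hop \emph{additively} against the single anchor $||sv||$ (the paper's Claim~\ref{claim1} gives $||vx||\le||vs||$, $||xy||\le2||vs||$, $||ys||\le4||vs||$, and the eight-case analysis then caps the total at $13||sv\diamond f||$). A pivot set $B$ sampled independently of $S$ gives you no such comparison, and your remark that ``errors compose multiplicatively'' points the analysis in the wrong direction — multiplicative composition would quickly exceed any fixed constant.
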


Thus the above oracle has size $\widetilde{O}(n^{4/3})$ when $|S| = O(n^{1/3})$. 
The work of Bil{\`{o}}, Gual{\`{a}}, Leucci,  and Proietti on multiple-edge fault-tolerant approximate shortest path trees \cite{BiloGLP22} 
in undirected weighted graphs with a {\em single source} (so $|S| = 1$) implies a multiple-edge fault-tolerant {\em sourcewise} (so $S$ is any subset of $V$) approximate distance oracle of size $\widetilde{O}(|S|n)$ with a stretch of $3$.\footnote{Unfortunately, we were unaware of this work till very recently. We thank Manoj Gupta for bringing this paper to our attention.} 
Thus their oracle is sparser than our oracles when $|S|$ is small and it also achieves a better stretch.
However, our query answering time is $O(1)$ while theirs is $O(\log^2(n))$, where 
$n$ is the number of vertices. Our algorithms are truly simple while their techniques are quite involved.

As mentioned above, the problem of constructing fault-tolerant sourcewise exact distance oracles in undirected unweighted graphs has been studied earlier. Also, in undirected weighted graphs, the problem of constructing fault-tolerant single source exact distance oracles has been studied. We discuss these results below.

\subparagraph{Background.} 
The first fault-tolerant exact distance oracle was designed by Demetrescu and Thorup in 2002 \cite{DemetrescuThorup02} and it was for directed weighted graphs. Their oracle handles single edge failures and has size $O(n^2\log n)$ with $O(1)$ query time. After this result, there has been 
a long line of research on the problem of efficiently constructing single edge/vertex fault-tolerant exact distance oracles. Ignoring preprocessing time, the most space-efficient oracle is by Duan and Zhang~\cite{DuanZ17a} with size $O(n^2)$ and query time $O(1)$. Thus it shaves off the $\log n$ factor from the size of the original oracle.

\medskip

\noindent{\em Fault-tolerant sourcewise distance oracles.}
For undirected unweighted graphs,
Gupta and Singh~\cite{GuptaS18} designed a single edge fault-tolerant sourcewise exact distance oracle  of
$\tilde{O}(n^{3/2}\sqrt{|S|})$ size with $\tilde{O}(1)$ query time and source set $S$. 
In undirected graphs with edge weights in the range $\{1,2, \dots,M\}$, Bil{\`{o}}, Cohen, Friedrich and Schirneck~\cite{BiloC0S21} designed a fault-tolerant single source exact distance oracle. This oracle handles single edge failures and has size $\tilde{O}(n^{3/2}\sqrt{M})$ with query time $\tilde{O}(1)$. 
For undirected unweighted graphs, Dey and Gupta~\cite{DeyG22} designed a different oracle with the same space and query time bounds as in~\cite{BiloC0S21}, but with a faster preprocessing time. 

\medskip

\noindent{\em $ST$-distance oracles in directed graphs.}
In directed weighted graphs, 
Bil{\`{o}}, Choudhary, Gual{\`{a}}, Leucci, Parter and Proietti~\cite{BiloCG0PP18}
designed a fault-tolerant $ST$-distance oracle, i.e., it maintains exact distances for all pairs in $S \times T$,
for given vertex subsets $S$ and $T$. 
It handles single edge failures and has size $\widetilde{O}((|S|+|T|)n)$ with $O(1)$ query time,
where $n$ is the number of vertices.
They also designed a fault-tolerant $ST$-distance oracle in unweighted directed graphs
of size $\widetilde{O}(n \sqrt{|S||T|})$ with query time $O(\sqrt{|S||T|})$.
Furthermore, they showed a fault-tolerant  $ST$-{\em approximate} distance oracle in directed unweighted graphs 
that returns in constant time a distance estimate stretched by an additive term.
In particular, when $|S| = O(\sqrt{n})$, their oracle has size $\widetilde{O}(n^{3/2})$ and additive stretch $\widetilde{O}(\sqrt{n})$.

\subparagraph{Fault-tolerant approximate distance oracles.}
Approximate distance oracles that provide distances within a small multiplicative stretch for all vertex pairs have been extensively studied. Table~\ref{tab:example} summarizes results for fault-tolerant approximate distance oracles in directed/undirected graphs. 
Note that the stretch here is multiplicative, except for  the last row where the stretch has an additive term as well.

\begin{table}[h]
\centering
\small
\begin{tabular}{|p{1.75cm}|c|p{2.2cm}|p{3.7cm}|c|p{0.5cm}|} 
\hline
Graph & Faults & Stretch & Size & Query time & Ref \\
\hline
Undirected Weighted & $c\geq 1$ & $(8k-2)(c+1)$, \ \ \ \ \ $k\ge 1$ integer & $O(ckn^{1+1/k}\log(nM))$, $M$ is the max edge wt& $\tilde{O}(c)$ & \cite{ChechikLPR12} \\
\hline
Undirected \newline Unweighted & $c = 1$ & $(2k-1)(1+\epsilon)$, \ \ \ $k \ge 1$ integer and $\epsilon >0$ & $\tilde{O}\left(\frac{k^5}{\epsilon^4}n^{1+1/k}\right)$ & $O(1)$ & \cite{BaswanaK13} \\
\hline
Undirected Weighted & $c = o\left(\frac{\log n}{\log \log n}\right)$ & $(1+\epsilon)$& $O(n^2(\log D/\epsilon)^c c \log D)$& $O(c^5\log D)$ & \cite{ChechikCFK17} \\
\hline
Directed \newline Unweighted & $c\geq2$ & $(3+\epsilon)$& $\tilde{O}(n^{2-\frac{\alpha}{c+1}}/\epsilon)(\log n / \epsilon)^c)$ where $\alpha\in(0,1/2)$ and $\epsilon >0$ & $O(n^{\alpha}/\epsilon^2)$ & \cite{BiloCCC0KS24} \\
\hline

Undirected \newline Unweighted & $c = o\left(\frac{\log n}{\log \log n}\right)$ & $(\frac{k+1}{k})(1+\epsilon)$ with additive stretch of 2, \ \ \ $k \ge 1$ integer and $\epsilon \geq 0$ & $O\left(\frac{n^{2-\frac{\gamma}{(k+1)(c+1)}+o(1)}}{\epsilon^{c+2}}\right)$ where $\gamma \in \left(0,\frac{k+1}{2}\right)$  & $O(n^{\gamma}/\epsilon^2)$& \cite{BiloCCC0S24} \\
\hline
Undirected Weighted & $c = o\left(\frac{\log n}{\log \log n}\right)$ & $(2k-1)$ where $k\ge 1$ integer & $O(n^{1+\frac{1}{k}+\alpha+o(1)})$ where $\alpha \in (0, 1)$  & $O(n^{1+\frac{1}{k}-\frac{\alpha}{k(c-1)}})$& \cite{BiloCCFKS23} \\

\hline
\end{tabular}
\caption{A table listing the works related to fault-tolerant approximate distance oracles where $D$ is the diameter of the graph.}
\label{tab:example}
\end{table}

For single edge faults, note that Chechik, Langberg, Peleg, and Roditty~\cite{ChechikLPR12} showed an approximate distance oracle with stretch~12 and size $\widetilde{O}(n^{3/2})$ and another with stretch~28 and size $\widetilde{O}(n^{4/3})$. In comparison to this, Theorem~\ref{lem : One layer} shows a sourcewise approximate distance oracle with stretch~5 and size 
$\widetilde{O}(|S|n + n^{3/2})$ and Theorem~\ref{lem : two layers} shows a sourcewise approximate distance oracle with stretch~13 and size $\widetilde{O}(|S|n + n^{4/3})$. Thus for small sets $S$, our oracles are as sparse and 
have smaller stretch.
Note that for single faults and every $k \ge 1$, approximate distance oracles by Baswana and Khanna~\cite{BaswanaK13} are almost 
as sparse as the oracles in \cite{ChechikLPR12} and have significantly smaller stretch. However these oracles 
work only for unweighted graphs.  

It is an open problem if our construction can be generalized to work for all integers~$k$, in other words, to show a
sourcewise approximate distance oracle of size $\widetilde{O}(|S|n + n^{1+1/k})$ and stretch~$8k-3$ with $O(1)$ query answering time 
for $k \ge 3$.
Our results show such a construction for $k = 1,2$. Note that
the remaining approximate distance oracles in Table~\ref{tab:example} 
have superconstant query time, so our oracles cannot directly be compared with them.

\subparagraph{Our techniques.} Our algorithms are simple to describe and use the $ST$-distance oracle by 
Bil{\`{o}}, Choudhary, Gual{\`{a}}, Leucci, Parter and Proietti~\cite{BiloCG0PP18}. Their oracle uses
{\em landmark} vertices, i.e., vertices picked uniformly at random from the vertex set $V$
(originally used by Bernstein and Karger~\cite{BernsteinK08}).\footnote{To the best of our knowledge, the name `landmark'
vertices was first used by Dey and Gupta~\cite{DeyGuptaESA24}.}
The oracle in Theorem~\ref{lem : One layer} uses this $ST$-distance oracle for the given source set~$S$ and $T = S \cup {\cal L}$, 
where ${\cal L}$
is our landmark vertex set. The oracle in Theorem~\ref{lem : two layers} is based on the same idea, however there are two levels of sampling here:
so we have two landmark vertex sets ${\cal L}_2 \subseteq {\cal L}_1$. 
Theorem~\ref{lem : One layer} and Theorem~\ref{lem : two layers} are proved in Section~\ref{sec:approximate} and Section~\ref{sec:approx-2},
respectively. We discuss preliminaries in Section~\ref{sec:prelims} and conclude in Section~\ref{sec:conclusions}.

\section{Preliminaries}
\label{sec:prelims}
This section describes the notation that will be used in the rest of the paper and also gives a sketch of the $ST$-distance oracle from
\cite{BiloCG0PP18}. Our input is an undirected graph $G = (V,E)$
with positive edge weights as given by $\wt: E \rightarrow \mathbb{R}_+$. 
For any path $\rho$ in $G$:
\begin{itemize}
    \item let $||\rho||$ be the {\em length} of $\rho$, i.e., $||\rho|| = \sum_{e\in\rho} \wt(e)$;
    \item let $|\rho|$ be the {\em hop length} of $\rho$, i.e., the number of edges in $\rho$.
\end{itemize}

For any $(u,v) \in V \times V$, a shortest path between $u$ and $v$ is a path of minimum length between $u$ and $v$. 
We assume the shortest path between any two vertices in the graph
is unique. This property can be achieved by random perturbation of the given edge weights (e.g., see \cite{ParterP13}). 
The  property of unique shortest paths was also used in \cite{BernsteinK09,DeyGupta24,DuanR22,GuptaS18,HershbergerS01}. 
We denote the shortest path from $u$ to $v$ by $uv$. Thus $||uv||$ is the distance between $u$ and $v$ in~$G$ and $|uv|$ 
is the hop length between $u$ and $v$ in $G$.

\begin{itemize}
    \item Let $G-f = (V, E\setminus\{f\})$ be the graph obtained after deleting edge $f$ from the graph $G$. As in $G$, 
    we assume there is a unique shortest path between any pair of vertices in $G-f$.
    \item For any $(u,v) \in V \times V$ and $f \in E$, let $uv \diamond f$ be the shortest path between $u$ and $v$ in $G - f$. 
So $||uv\diamond f||$ is the distance between $u$ and $v$ in $G-f$.
\end{itemize}

The concept of {\em landmark} vertices will be key to our distance oracles. 

\begin{definition}[Landmark Vertex Set, $\LL$]
\label{def:landmark}
    Sample each vertex in $G$ independently with probability $p$.
    The selected set (call it $\LL$) of vertices is the landmark vertex set. 
\end{definition}

The probability $p$ in Definition~\ref{def:landmark} will be set
to different values in Section~\ref{sec:approximate} and Section~\ref{sec:approx-2}.
The following proposition on the landmark vertex set $\LL$ will be very useful to us.
\begin{proposition}
\label{lem:landmark}
With high probability, for any pair of vertices $u$ and $v$, if  $|uv| \ge \lfloor \frac{3\ln n}{p}\rfloor$
then there is at least one landmark vertex on $uv$.
\end{proposition}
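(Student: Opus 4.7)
The plan is a standard landmark-sampling union bound. I would fix an arbitrary ordered pair $(u,v)$ with $|uv| \ge \lfloor 3\ln n / p\rfloor$, and examine the vertices lying on the (unique) shortest path $uv$. Since $uv$ has at least $\lfloor 3\ln n/p\rfloor$ edges, it contains at least $\lfloor 3\ln n/p\rfloor$ distinct vertices. Each vertex in $V$ is placed in $\LL$ independently with probability $p$, so the events ``$x \in \LL$'' for distinct vertices $x$ on $uv$ are mutually independent.

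Let $k = \lfloor 3\ln n/p\rfloor$. The probability that none of a fixed set of $k$ vertices lands in $\LL$ equals $(1-p)^k$. Using the inequality $1-p \le e^{-p}$, this is at most $e^{-pk} \le e^{-p(3\ln n/p - 1)} = e \cdot n^{-3}$. Thus the probability that no landmark lies on $uv$ is at most $e/n^3$.

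Next I would take a union bound over all ordered pairs $(u,v) \in V \times V$ with $|uv| \ge \lfloor 3\ln n/p\rfloor$; there are fewer than $n^2$ such pairs. Hence the probability that there exists some pair of vertices whose shortest path has hop length at least $\lfloor 3\ln n/p\rfloor$ and contains no landmark is at most $n^2 \cdot e/n^3 = e/n$. Equivalently, with probability at least $1 - e/n$, every pair $(u,v)$ with $|uv| \ge \lfloor 3\ln n/p\rfloor$ has at least one landmark vertex on its shortest path, which is the claimed high-probability statement.

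There is no real obstacle here; the only minor care needed is that the hop length $|uv|$ counts edges rather than vertices (so a path of hop length $k$ has $k+1$ vertices, more than enough to absorb the floor), and that $p$ must be small enough for the exponent $3\ln n/p$ to dominate the constant lost to the floor, which will be the case for the values of $p$ chosen in Sections~\ref{sec:approximate} and \ref{sec:approx-2}.
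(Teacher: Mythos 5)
Your argument is correct and follows essentially the same route as the paper: bound the probability that a fixed long shortest path misses $\LL$ by $(1-p)^{\Theta(\ln n / p)} = O(n^{-3})$, then union bound over all pairs. The only cosmetic difference is that the paper counts the $|uv|+1$ vertices on the path, which makes the floor disappear exactly and yields the clean bound $1/n^3$, whereas you count only $\lfloor 3\ln n/p\rfloor$ vertices and absorb the floor by paying a constant factor $e$; both give the same high-probability conclusion.
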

\begin{proof}
Since each vertex in $G$ is sampled independently with probability $p$, for any pair of vertices $u$ and $v$, the probability that there is {\em no} landmark vertex on $uv$ is $(1-p)^k$ where $k = |uv|+1$ is the number of vertices on $uv$. 
Because 
$|uv| \ge \lfloor\frac{3\ln n}{p}\rfloor$, the probability that there is no landmark vertex on $uv$ is at most:
\[(1-p)^{\frac{3\ln n}{p}} \ \ \le \ \ \left(\frac{1}{e}\right)^{3\ln n} \ \ \le \ \ \frac{1}{n^3}.\]

Thus for any pair of vertices $u$ and $v$ with $|uv| \ge \lfloor\frac{3\ln n}{p}\rfloor$, the probability that 
there is no landmark vertex 
on $uv$ is at most $1/n^3$. Hence the probability that there is some pair $(x,y) \in V \times V$ with 
$|xy| \ge \lfloor\frac{3\ln n}{p}\rfloor$ such that there is no landmark vertex on $xy$ is at most ${n\choose 2}/n^3 \le 1/n$. 
Thus with probability at least $1-1/n$, it is the case that 
for every pair $(u,v)\in V \times V$ with $|uv| \ge \lfloor\frac{3\ln n}{p}\rfloor$, 
there is at least one landmark vertex on $uv$.
\end{proof}

Since each vertex in $G$ is sampled with probability $p$, 
the expected size of $\LL$ is $np$. We will set $p = n^{-\delta}$ for some $\delta \in (0,1)$ in Section~\ref{sec:approximate} and Section~\ref{sec:approx-2}. Thus with high probability,  we  will have $|\LL| \le 2np$ (by Chernoff bound). 

\begin{itemize}
    \item If either $|\LL| > 2np$ or there exists a pair of vertices $u, v$ with $|uv| \ge \lfloor \frac{3\ln n}{p}\rfloor$
such that there is no vertex of $\LL$ on $uv$ then we will repeat the step of sampling vertices and construct another 
landmark vertex set such that both these properties hold for the set obtained.
\end{itemize}

Thus we will assume that $|\LL| = O(np)$ and every pair of vertices
$u, v$ with $|uv| \ge \lfloor \frac{3\ln n}{p}\rfloor$ has
at least one vertex of $\LL$ on $uv$.
The expected number of trials to obtain a desired landmark set $\LL$ is $O(1)$. 

\subparagraph{Fault-Tolerant $ST$-Distance Oracle.}
We now briefly discuss the algorithm of Bil{\`o}  et al.\cite{BiloCG0PP18} to construct a fault-tolerant exact distance oracle in $G$
for pairs $(s,t) \in S \times T$, where $S \subseteq V$ and $T \subseteq V$ are part of the input.

Fix a pair $(s,t) \in S \times T$ and let $f = (a,b)$ be any edge on $st$. Let
$\ell$ and $\ell'$ be the two landmark vertices  on  $st$  closest to  $a$ and $b$ on $as$ and $bt$, respectively.
There are 3 cases with respect to the replacement path $st\diamond f$: (i)~$st\diamond f$ goes through $\ell$, (ii)~$st\diamond f$ goes through $\ell'$,
     (iii)~$st\diamond f$ goes through neither $\ell$ nor $\ell'$.

Their algorithm builds tables to deal with each of these cases. Figure~\ref{fig4} captures the main idea.
The following theorem from \cite{BiloCG0PP18} will be used in our algorithms.

        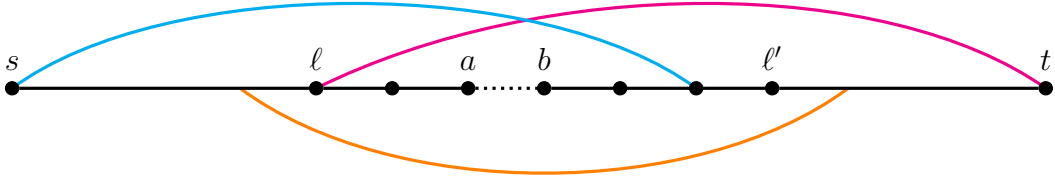
\begin{figure}[h]
\tikzstyle{vertex} = [circle, draw=black, fill=black, inner sep=0pt,  minimum size=5pt]
\tikzstyle{edgelabel} = [circle, fill=white, inner sep=0pt,  minimum size=15pt]
\centering
\pgfmathsetmacro{\d}{2}
\begin{minipage}{0.99\textwidth}
\begin{tikzpicture}[scale=1, transform shape]
               \path [very thick,draw,magenta] (-2,0) .. controls (1,1.5) and (5.5,1.5) ..  (7.6,0);
           \path [very thick,draw,cyan] (-6,0) .. controls (-4,1.5) and (1,1.5) ..  (3,0);
           \path [very thick,draw,orange] (-3,0) .. controls (-1,-1.5) and (3,-1.5) ..  (5,0);
\node[vertex, label=above:$a$] (a1) at (0,0) {};
\node[vertex, label=above:$b$] (b2) at (\d/2,0) {};
    \node[vertex] (b4) at (\d,0) {};
    \node[vertex] (b5) at (3*\d/2,0) {};
\node[vertex, label=above:$\ell$] (b3) at (-\d,0) {};
    \node[vertex] (a4) at (-\d/2,0) {};
    \node[vertex, label=above:$s$] (b1) at (-3*\d,0) {};
\node[vertex, label=above:$\ell'$] (a3) at ($(a1) + (2*\d, 0)$) {};
      \node[vertex, label=above:$t$] (a2) at ($(a1) + (3.8*\d, 0)$) {};
       \draw [very thick] (a2) -- (a3);
     \draw [very thick] (b1) -- (b3);
         \draw [very thick,dotted] (a1) -- (b2);
        \draw [very thick] (a1) -- (b3);
           \draw [very thick] (b2) -- (a3);

\end{tikzpicture}
\end{minipage}
\caption{The replacement path $st \diamond f$ where $f = (a,b)$ in case~(i) is $s\ell$ followed by the {\color{magenta} magenta} path $\ell t \diamond f$; in case~(ii) it is the {\color{cyan} blue} path $s\ell'\diamond f$ followed by $\ell't$ and in case~(iii) $st \diamond f$ avoids both $\ell$ and $\ell'$ - so the {\color{orange} orange} path is part of $st \diamond f$.}
\label{fig4}
\end{figure}

\begin{theorem}[\cite{BiloCG0PP18}]
\label{st-oracle}
    An $n$-vertex directed or undirected weighted graph G for given subsets $S$ and $T$ of $V$
can be preprocessed in polynomial time to compute a data structure of size $O((|S| + |T|)n\log n)$ 
that given any pair $(s,t) \in S\times T$ and any failing edge $f$ can report $||st \diamond f||$ in constant
time. 
\end{theorem}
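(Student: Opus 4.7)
The plan is to support each of the three cases of the replacement path $st \diamond f$ shown in Figure~\ref{fig4} with its own precomputed table. I would first sample a landmark set $\LL$ with probability $p = \Theta(\log n/\sqrt{n})$, so that $|\LL| = \widetilde{O}(\sqrt{n})$ and, by Proposition~\ref{lem:landmark}, every shortest path with more than $\widetilde{O}(\sqrt{n})$ hops carries a landmark with high probability. The common scaffolding is: for each $s \in S$ (resp.\ $t \in T$) build the shortest path tree $T_s$ (resp.\ $T_t$) and record all the distances $||sv||$ (resp.\ $||tv||$). This uses $O((|S|+|T|)n)$ space and lets me decide in $O(1)$ whether $f$ lies on $st$; if it does not, the answer is simply $||st||$.

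For case (i) the replacement equals $||s\ell|| + ||\ell t \diamond f||$, and for case (ii) it equals $||s\ell' \diamond f|| + ||\ell' t||$. To serve case (ii) I would attach to each $s \in S$ a compact Demetrescu--Thorup-style single-source fault-tolerant oracle that answers $||sv \diamond f||$ for any $v$ and $f$ in $O(1)$ time while using only $O(n \log n)$ space per source, by virtue of storing only $O(\log n)$ distinct replacement distances along each root-to-leaf path of $T_s$. The symmetric construction rooted at each $t \in T$ serves case (i). Identifying $\ell$ and $\ell'$ in $O(1)$ is handled by preprocessing each edge of $T_s$ with a pointer to its nearest landmark ancestor and descendant.

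Case (iii), where the replacement path avoids both $\ell$ and $\ell'$, is the main obstacle. Such failures $f$ must lie on the hop-bounded subpath of $st$ between $\ell$ and $\ell'$, which has $\widetilde{O}(\sqrt{n})$ edges; a naive table storing the true $||st \diamond f||$ for every case-(iii) triple would cost $\widetilde{O}(|S||T|\sqrt{n})$, beyond the target. I would bypass this by observing that, for fixed $s$, the values $\{||st \diamond f||\}_f$ as $f$ sweeps through the case-(iii) region of $T_s$ form a piecewise constant function with only $O(\log n)$ distinct values per target $t$; this is the same doubling/decomposition property that underlies the Demetrescu--Thorup all-pairs oracle, and it can be piggy-backed onto the single-source table at cost $O(n \log n)$ per source. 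A symmetric argument yields a table of size $O(|T| n \log n)$, summing to the claimed $O((|S|+|T|)n\log n)$.

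The hard part, and where I would spend most effort, is proving that the case-(iii) distances really admit an $O(\log n)$-succinct piecewise encoding once $\ell$ and $\ell'$ are fixed, together with the constant-time decoding needed for the overall query bound. Given that, the oracle returns the minimum of the three candidate distances drawn from the three tables, all in $O(1)$, as asserted in Theorem~\ref{st-oracle}.
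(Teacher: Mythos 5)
Theorem~\ref{st-oracle} is cited from Bil\`o et al.~\cite{BiloCG0PP18}; the present paper does not prove it. What it provides is only the informal sketch around Figure~\ref{fig4}: pick the two landmarks $\ell,\ell'$ flanking the failed edge on $st$, and split into three cases by whether $st\diamond f$ passes through $\ell$, through $\ell'$, or through neither. Your scaffolding matches that sketch, but the proposal has two genuine gaps and so does not amount to a proof.

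For cases~(i) and~(ii) you invoke a ``compact Demetrescu--Thorup-style single-source fault-tolerant oracle'' answering $||sv\diamond f||$ for any $v,f$ in $O(1)$ time in $O(n\log n)$ space per source, ``by virtue of storing only $O(\log n)$ distinct replacement distances along each root-to-leaf path.'' That last claim is false for weighted graphs: along a single root-to-leaf path $sv$, the replacement distances $||sv\diamond f||$ as $f$ ranges over the edges of $sv$ can all be distinct. The Demetrescu--Thorup $O(n^2\log n)$ oracle does not decompose into independent $O(n\log n)$-space single-source pieces; it relies on pairwise information. Indeed an $O(n\log n)$-space, $O(1)$-query exact single-source fault-tolerant oracle is precisely the $|T|=O(1)$ instance of the theorem you are trying to prove, so assuming it is circular. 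What the cases actually require is weaker---$||s\ell'\diamond f||$ and $||\ell t\diamond f||$ only for landmarks $\ell,\ell'\in\LL$ and only for failures $f$ within the sampling hop-bound of $\ell,\ell'$ along the relevant shortest path---and it is that restriction, not a distinctness claim, that keeps the table sizes within $O((|S|+|T|)n\log n)$.

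For case~(iii) you flag it yourself: ``the hard part\dots is proving that the case-(iii) distances really admit an $O(\log n)$-succinct piecewise encoding.'' That is exactly the missing lemma, and I do not believe the piecewise-constant property holds as stated---nothing forces the $\widetilde{O}(\sqrt n)$ values $||st\diamond f||$ in the case-(iii) window to repeat. The published construction instead exploits that, when $st\diamond f$ avoids both $\ell$ and $\ell'$, the detour of $st\diamond f$ must leave and re-enter $st$ within the hop-bounded window between $\ell$ and $\ell'$, and it charges the storage against per-source and per-target tables of bounded breadth. Without a correct substitute for your case~(iii) claim, and with the cases~(i)/(ii) subroutine unjustified, the proposal remains an outline consistent with Figure~\ref{fig4} rather than a proof of Theorem~\ref{st-oracle}.
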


\section{Sourcewise Approximate Distance Oracle for a Single Edge Fault}
\label{sec:approximate}

Our input is an undirected weighted graph $G = (V,E)$ with a positive weight function $\wt: E \rightarrow \mathbb{R}_+$ 
and a subset $S \subseteq V$ of sources. The goal is
to build a compact data structure that can answer distance queries $Qu(s,v,f)$ within a small multiplicative stretch,
where $s \in S, v \in V$ and $f$ is the edge fault.

\subparagraph{Landmark vertex set $\LL$.} Recall Definition~\ref{def:landmark} on landmark vertices. 
Let us sample each vertex independently with probability $p = (3\ln n)/\sqrt{n}$
to obtain our landmark vertex set~$\LL$.  
The following two properties hold (see Proposition~\ref{lem:landmark}); 
otherwise we resample to obtain another landmark vertex set 
$\LL$ so that the following two properties hold.
\begin{itemize}
    \item $|\LL| = O(\sqrt{n}\log n)$.
    \item For any pair of vertices $u$ and $v$: if  $|uv| \ge \lfloor\sqrt{n}\rfloor$,
then there is at least one landmark vertex on $uv$.
\end{itemize}

\subparagraph{Our algorithm.} On input $G = (V,E)$ and $S \subseteq V$, 
the first step of our algorithm is to build the above landmark vertex set $\LL$. We then compute
shortest path trees ${\cal T}(u)$ rooted at $u$ for all $u \in S \cup \LL$.
Along with every vertex $v$, the tree ${\cal T}(u)$ also has the two attributes $||uv||$ and $|uv|$, i.e., the length 
and the hop length of $uv$.

Our algorithm constructs the $ST$-exact distance oracle from \cite{BiloCG0PP18}
fixing the source set $S$ and destination set $T = S \cup \LL$. 
For each $v\in V$, let $t_v$ be the vertex in $T$ that is closest to~$v$, where
ties are broken arbitrarily. 
We maintain the lengths of replacement paths $vt_v \diamond f$ for each edge $f \in vt_v$.
Our algorithm is described below.

\begin{enumerate}
    \item Obtain the landmark vertex set $\LL$. 
\item For each $u \in S \cup \LL$ do: compute the shortest path tree ${\cal T}(u)$ rooted at $u$ in $G = (V,E)$.
    \item Use Theorem~\ref{st-oracle} to construct an $ST$-exact distance oracle for the given source set $S$ 
    and target set $T = \LL \cup S$ in $G = (V,E)$. 
     \item For every $v \in V$ in the graph $G = (V,E)$ do:
    \begin{itemize}
        \item Identify the nearest vertex to $v$ in the target set $T = \LL \cup S$. Call this vertex $t_v$.
        \item For $1 \le i \le |vt_v|$ do:
        \begin{itemize}
            \item Let $f_i$ be the $i$-th edge from $t_v$ on $vt_v$. 
            \item Compute the distance $||vt_v \diamond f_i||$ between $v$ and $t_v$ in $G-f_i$. 
            \item Set $\Dist_T[v,i] = ||vt_v \diamond f_i||$.
        \end{itemize}
    \end{itemize}
\end{enumerate}

\subparagraph{Query answering algorithm.} In response to the query $Qu(s,v,f)$, the query answering algorithm first checks if 
$f \in sv$. This check can be done efficiently via LCA queries. Given a rooted tree ${\cal T}$ and a pair of vertices $x,y$ in the tree 
${\cal T}$, recall that $\LCA_{\cal T}(x,y)$ is the least common ancestor of $x$ and $y$ in tree ${\cal T}$.

Observe that $f = (a,b) \in sv$ if and only if the answer to the following three questions is `yes' where ${\cal T}(s)$ is
the shortest path tree in $G$ rooted at $s$. 
\begin{itemize}
    \item Is $\LCA_{{\cal T}(s)}(v,a)$ equal to $a$?
    \item Is $\LCA_{{\cal T}(s)}(v,b)$ equal to $b$?
    \item Is $|sa| + 1 = |sb|$ or is $|sb| + 1 = |sa|$?
\end{itemize}

A `yes' answer to the first two questions implies that both $a$ and $b$ are vertices on the path $sv$.
Moreover, $a$ and $b$ are adjacent to each other on $sv$ if and only if the answer to the third question is `yes'.
Recall that for any vertex $w$, $|sw|$ is the hop length between $s$ and $w$, i.e., the number of edges in $sw$.

Given a tree ${\cal T}$, there is a linear time algorithm to build an $O(n)$ size data structure such that
LCA queries on ${\cal T}$  can be answered in $O(1)$ time~\cite{BenderF00}. Recall that for every vertex $w$,
the hop length $|sw|$ is stored along with $w$ in ${\cal T}(s)$. Thus $|sa|$ and $|sb|$ can be retrieved in $O(1)$ time.
Hence the query answering algorithm can determine in $O(1)$ time if $f \in sv$ or not. 
The query answering algorithm will return $||sv||$ if $f \notin sv$ (see Figure~\ref{fig2}).
Recall that the distance $||sv||$ is also stored along with $v$ in ${\cal T}(s)$.

\begin{figure}[h]
\tikzstyle{vertex} = [circle, draw=black, fill=black, inner sep=0pt,  minimum size=5pt]
\tikzstyle{edgelabel} = [circle, fill=white, inner sep=0pt,  minimum size=15pt]
\centering
\begin{tikzpicture}[scale=1, transform shape]      
       \coordinate (s) at (-5,0);
        \coordinate (t) at (5,0);
        \filldraw [black] (s) circle(3pt);
         \filldraw [black] (t) circle(3pt);
         \draw [very thick] (s) -- (t);

         \node[left, scale = 1.25, red] at (3,-0.5) {$a$};
         \node[left, scale = 1.25, red] at (3,-1.2) {$b$};
         \draw [ultra thick, red, dashed] (3,-0.5) -- (3,-1.2);
        \filldraw [red] (3,-0.5) circle(2pt); 
        \filldraw [red] (3,-1.2) circle(2pt); 
         \node[left, scale=1.25] at (s) {$s$};
         \node[right, scale=1.25] at (t) {$v$};
         
          \filldraw [black] (3,-1.8) circle(3pt);
           \node[right, scale=1.25] at (3,-1.9) {$t_v$};

           \path [very thick,draw,black] (3,0) -- (3,-0.5);
           \path [very thick,draw,black] (3,-1.2) -- (3,-1.8);
\end{tikzpicture}
\caption{Here $f = (a,b) \notin sv$, so the path $sv$ is undisturbed by the edge fault $f$.}
\label{fig2}
\end{figure}
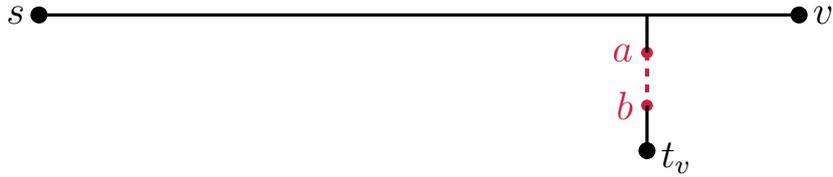

If $f \in sv$, then the query answering algorithm looks up the identity of $t_v$, which is the nearest
vertex in $T$ to $v$. 
Via LCA queries on ${\cal T}(t_v)$, we can determine if $f = (a,b) \in vt_v$ or not.
If not, then $||vt_v \diamond f|| = ||vt_v||$. So let us assume $f \in vt_v$.

Assume without loss of generality that $b$ is closer than $a$ to $t_v$,
i.e., $|at_v| = |bt_v| + 1$ (see Figure~\ref{fig2}). Let  $i$ be the index such that $a$ is the $i$-th vertex from $t_v$ on $vt_v$.
Then $||vt_v\diamond (a,b)|| = \Dist_T[v,i]$.
Recall that the attribute $i = |at_v|$ is stored along with $a$ in ${\cal T}(t_v)$.

\begin{itemize}
\item The query answering algorithm returns $||vt_v \diamond f|| + ||st_v \diamond f||$, where 
$||vt_v \diamond f|| = \Dist_T[v,i]$.
\end{itemize}

Note that the distance $||st_v \diamond f||$ is obtained by querying 
the $ST$-distance oracle. Thus the query answering time is $O(1)$.
We show below 
that $||vt_v \diamond f|| + ||st_v \diamond f||  \le 5||sv\diamond f||$.

\begin{lemma}
    \label{lem:source-correct}
    For any $(s,v) \in S \times V$ and $f \in E$, our algorithm returns an estimate for the $s$-$v$ distance in $G-f$ with a multiplicative stretch of at most $5$ in constant time.
\end{lemma}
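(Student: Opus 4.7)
The plan is to verify that the algorithm's output $d_{G-f}(s,v)$ satisfies $||sv \diamond f|| \le d_{G-f}(s,v) \le 5\cdot||sv \diamond f||$, analysing the two branches of the query-answering procedure separately. When $f \notin sv$, the algorithm returns $||sv||$, which equals $||sv \diamond f||$ because $sv$ remains intact in $G-f$ and is still the shortest $s$-$v$ path; this gives stretch~$1$ on the nose. When $f \in sv$, the algorithm returns $R := ||vt_v \diamond f|| + ||st_v \diamond f||$, and the lower bound $R \ge ||sv \diamond f||$ is immediate from the triangle inequality in $G-f$. The remaining task is to show $R \le 5d$ where $d := ||sv \diamond f||$.

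A useful preliminary bound is $||vt_v|| \le ||vs|| \le d$: the first inequality holds because $s \in T = S \cup \LL$ and $t_v$ is, by definition, the $T$-vertex nearest to $v$ in $G$, and the second holds because $sv \diamond f$ is an $s$-$v$ walk in $G$. I then split according to whether $f$ lies on the path $vt_v$. If $f \notin vt_v$, then $||vt_v \diamond f|| = ||vt_v|| \le d$, and the triangle inequality in $G-f$ gives $||st_v \diamond f|| \le ||sv \diamond f|| + ||vt_v|| \le 2d$, so $R \le 3d$.

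The main sub-case is $f \in vt_v$, where both unique shortest paths $vs$ and $vt_v$ traverse $f = (a,b)$. By comparing hop-distances from $v$, I argue that both paths must cross $f$ in the same orientation (say reaching $a$ before $b$ when read from $v$, as the opposite choices for the two paths would make the hop-distances $|va|$ and $|vb|$ mutually contradictory); combined with uniqueness of shortest paths, this forces $vs$ and $vt_v$ to share the prefix $va$ before crossing $f$, then split into $f$-free tails which I denote $[b \to s]$ and $[b \to t_v]$. Concatenating the reverse of $[b \to s]$ with $[b \to t_v]$ yields an $s$-$t_v$ walk in $G-f$ of length at most $||vs|| + ||vt_v|| \le 2d$, so $||st_v \diamond f|| \le 2d$. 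Prepending the replacement path $sv \diamond f$ then produces a $v$-$t_v$ walk in $G-f$ of length at most $d + 2d = 3d$, so $||vt_v \diamond f|| \le 3d$, and summing gives $R \le 5d$ as desired.

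The delicate point is this last sub-case: naive triangle-inequality manipulations alone are circular because $||vt_v \diamond f||$ and $||st_v \diamond f||$ control each other. The structural observation that two unique shortest paths from $v$ both crossing $f$ must share the prefix to the $v$-side endpoint of $f$ is what breaks the circle, letting us stitch together the two $f$-free tails of $vs$ and $vt_v$ into a short $s$-$t_v$ walk that avoids $f$.
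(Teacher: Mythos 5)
Your proof is correct and reaches the same stretch bound, but it organizes the case analysis differently and substitutes a structural argument for the paper's triangle-inequality bookkeeping in the hard case. The paper splits the $f\in sv$ branch into four sub-cases according to whether $f$ lies on $st_v$ and on $vt_v$: one of these (both) is shown impossible, and in the other hard case ($f\notin st_v$, $f\in vt_v$) the paper bounds $||st_v\diamond f|| = ||st_v|| \le ||sv||+||vt_v|| \le 2||sv||$ directly and then stitches $sv\diamond f$ with $st_v$ to bound $||vt_v\diamond f||$. You instead split only on whether $f\in vt_v$, and in the hard sub-case you use uniqueness of shortest paths to observe that $vs$ and $vt_v$ must share their prefix through $f$, yielding an $f$-avoiding $s$-$t_v$ walk of length at most $||vs||+||vt_v||\le 2||sv\diamond f||$. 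This gives $||st_v\diamond f||\le 2d$ \emph{without} needing to know whether $f\in st_v$, which is what lets you skip the paper's separate argument that the ``$f$ on both $st_v$ and $vt_v$'' case cannot occur. Your route is a bit more robust in that sense (it collapses two of the paper's sub-cases into one and sidesteps an impossibility argument whose details the paper leaves fairly terse), at the cost of invoking the shared-prefix structure of unique shortest paths rather than just triangle inequalities. Both approaches use the same elementary fact $||vt_v||\le ||vs||\le ||sv\diamond f||$ and arrive at the same bounds of $3$ and $5$ on the respective branches.
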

\begin{proof}
It follows from the discussion above that the query answering time is $O(1)$.
Since the query answering algorithm will return $||sv||$ if $f \notin sv$ (see Figure~\ref{fig2}),
let us assume $f \in sv$. 
Then the distance estimate returned by the query answering algorithm in response to query $Qu(s,v,f)$  is 
$||st_v \diamond f|| + ||vt_v\diamond f||$ where $t_v$ is the nearest vertex in $T$ to~$v$. We now bound the sum
$||st_v \diamond f|| + ||vt_v\diamond f||$. 
Consider the following four cases.

        \begin{enumerate}     
        \item $f \in st_v$ and $f \in vt_v$. 
        This means the edge $f$ 
        belongs to the shortest path between $t_v$ and the least common ancestor of $s$ and $v$
        in ${\cal T}(t_v)$ (see Figure~\ref{fig2}). 
        However then $f \notin sv$, contradicting our assumption that $f \in sv$.

        \smallskip
        \item $f \in st_v$ and $f \notin vt_v$. The query answering algorithm 
        will determine via LCA queries that $f \notin vt_v$, so $||vt_v \diamond f|| = ||vt_v||$.
        Let us bound $||st_v \diamond f||$. The graph $G-f$ has an $s$-$t_v$ path 
        obtained by stitching the paths $sv\diamond f$ and $vt_v$, i.e., the path $sv\diamond f$ followed by $vt_v$.
        So $||st_v\diamond f|| \le ||sv\diamond f|| + ||vt_v||$. Hence the distance returned is at most $||sv \diamond f|| + 2||vt_v||$. 
        \begin{itemize}
        \item Because $t_v$ is the closest vertex in $T = \LL \cup S$ to $v$, 
        we have $||vt_v|| \le ||vs||$. Hence the distance returned is at most 
        $||sv\diamond f|| + 2||sv|| \le 3||sv\diamond f||$. So the stretch is at most~3 in this case. 
        \end{itemize}
        \smallskip
        \item $f \notin st_v$ and $f \in vt_v$. Since $f \notin st_v$, we have $||st_v \diamond f|| = ||st_v||$.
        Let us bound $||vt_v \diamond f||$. Since the graph $G-f$ has a $vt_v$ path obtained by stitching $sv\diamond f$ and $st_v$,
        we have $||vt_v \diamond f|| \le  ||sv \diamond f|| + ||st_v||$ (see Figure~\ref{fig3}). Thus the distance returned by the oracle is at most $||sv \diamond f|| + 2||st_v||$.
        \begin{itemize}
            \item Observe that $||st_v|| \le ||sv|| + ||vt_v|| \le 2||sv||$. Hence the distance returned is at most 
        $||sv\diamond f|| + 4||sv|| \le 5||sv\diamond f||$. Thus the stretch is at most 5 in this case.
        \end{itemize}

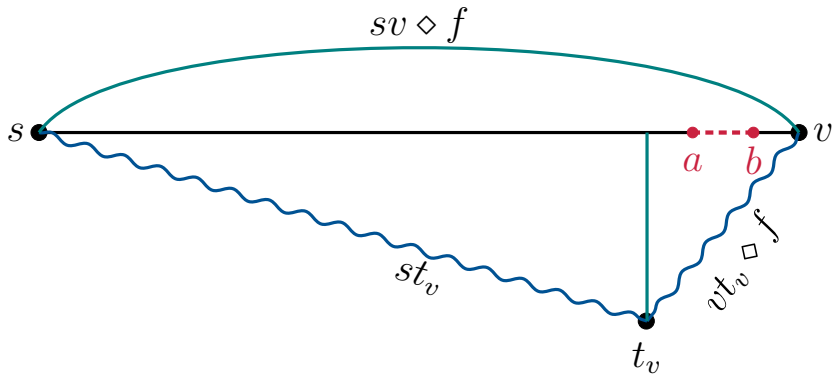
\begin{figure}[h]
\tikzstyle{vertex} = [circle, draw=black, fill=black, inner sep=0pt,  minimum size=5pt]
\tikzstyle{edgelabel} = [circle, fill=white, inner sep=0pt,  minimum size=15pt]
\centering
\begin{tikzpicture}[scale=1, transform shape]

       \coordinate (s) at (-5,0);
        \coordinate (t) at (5,0);
        \filldraw [black] (s) circle(3pt);
         \filldraw [black] (t) circle(3pt);
         \draw [very thick] (s) -- (3.6,0);
         \draw [very thick] (t) -- (4.4,0);

     \node[below, scale = 1.25, red] at (4.4,0) {$b$};
         \node[below, scale = 1.25, red] at (3.6,-0.1) {$a$};
          \filldraw [red] (4.4,0) circle(2pt); 
        \filldraw [red] (3.6,0) circle(2pt); 

         \node[left, scale=1.25] at (s) {$s$};
         \node[right, scale=1.25] at (t) {$v$};
         
          \filldraw [black] (2.99,-2.5) circle(3pt);
           \node[below, scale = 1.25] at (3,-2.6) {$t_v$};

      \draw[very thick, blue, decorate, decoration={snake, amplitude=0.5mm, segment length=5mm}] (s) -- (3,-2.5);
              \draw[very thick, blue, decorate, decoration={snake, amplitude=0.5mm, segment length=5mm}] (t) -- (3,-2.5);

           \path [very thick,draw,teal] (s) .. controls (-4,1.5) and (4,1.5) ..  (t);
           \path [very thick,draw,teal] (3,0) -- (3,-2.5);
           \draw (0,-1.9) node[ rotate=-15][scale=1.25] {$st_v$};
           \draw (3.9,-1.5) node[below, rotate=57][scale=1.25] {$vt_v \diamond f$};
            \draw (0,1.85) node[below][scale=1.25] {$sv \diamond f$};
             \draw [ultra thick, red, densely dashed] (4.4,0) -- (3.6,0);
\end{tikzpicture}
\caption{The oracle returns a distance estimate $\le 2||st_v|| + ||sv \diamond f||$, so the stretch is $\le  5$.}
\label{fig3}
\end{figure}
        
        \item $f \notin st_v$ and $f \notin vt_v$. The query answering algorithm will return $||st_v \diamond f|| + ||vt_v \diamond f|| = ||st_v|| + ||vt_v||$ in this case. We have $||vt_v|| \le ||vs||$ 
        and we also have $||st_v|| \le ||sv|| + ||vt_v|| \le 2||sv||$. Thus the stretch is at most 3 in this case.
    \end{enumerate}
This finishes the proof of the lemma.
\end{proof}

\subparagraph{Data structures constructed.} Our algorithm constructs in step~3 all the data structures constructed 
by the $ST$-distance oracle algorithm. Thus we have access to $||st \diamond f||$ for every $(s,t) \in S \times T$ and $f \in E$.
Our oracle also has the table $\Dist_T$ that stores $||vt_v \diamond f||$ between $v$ and $t_v$ in $G-f$, for each $v \in V$ 
and edge $f \in vt_v$. 
Let us bound the size of our oracle.
\newpage

\begin{lemma}
    \label{lem:source-size}
    The size of the data structures constructed by our algorithm is $\widetilde{O}(|S|n + n^{3/2})$.
\end{lemma}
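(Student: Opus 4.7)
The plan is to bound separately the sizes of the three data structures built by the algorithm: (i)~the shortest path trees $\mathcal{T}(u)$ (augmented with the $O(n)$-size LCA data structure of \cite{BenderF00}) for each $u \in S \cup \LL$; (ii)~the fault-tolerant $ST$-distance oracle of Theorem~\ref{st-oracle} applied to source set $S$ and target set $T = S \cup \LL$; and (iii)~the replacement-distance table $\Dist_T$. I will then sum these contributions.

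Pieces (i) and (ii) are routine once we know $|\LL| = O(\sqrt{n}\log n)$, which holds with high probability by the choice $p = (3\ln n)/\sqrt{n}$ and the Chernoff-based argument sketched before Proposition~\ref{lem:landmark} (and is enforced by the resampling step). Then (i) contributes $O((|S| + \sqrt{n}\log n)\cdot n) = \widetilde{O}(|S|n + n^{3/2})$, and Theorem~\ref{st-oracle} gives (ii) in $O((|S| + |T|)n\log n) = \widetilde{O}(|S|n + n^{3/2})$ space since $|T| = |S \cup \LL| = O(|S| + \sqrt{n}\log n)$.

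The crux is bounding (iii), whose total number of entries is $\sum_{v \in V} |vt_v|$. The plan is to show that $|vt_v| \le \lfloor\sqrt{n}\rfloor$ for every $v \in V$. Suppose for contradiction that $|vt_v| > \lfloor\sqrt{n}\rfloor$, and let $u$ be the neighbor of $t_v$ on the path $vt_v$; then the subpath $vu$ satisfies $|vu| \ge \lfloor\sqrt{n}\rfloor$. By the landmark property (Proposition~\ref{lem:landmark} applied with $p = (3\ln n)/\sqrt{n}$), some $\ell \in \LL$ lies on $vu$. Since $\ell \in \LL \subseteq T$, lies on the shortest path $vt_v$, and is strictly closer to $v$ than $t_v$ (because $\ell \ne t_v$ and $\ell$ occurs before $t_v$ on this shortest path), we have $\|v\ell\| < \|vt_v\|$, contradicting the choice of $t_v$ as the nearest vertex in $T$ to $v$. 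Hence $\sum_{v} |vt_v| = O(n \cdot \sqrt{n}) = O(n^{3/2})$.

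Adding the three contributions yields the stated $\widetilde{O}(|S|n + n^{3/2})$ bound. The only step that is not pure accounting is the hop-length bound $|vt_v| = O(\sqrt{n})$; this is the main obstacle, but it reduces to a short argument combining Proposition~\ref{lem:landmark} with the minimality of $t_v$ in $T$.
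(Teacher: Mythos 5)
Your proof is correct and follows the same decomposition as the paper: bound the shortest path trees plus LCA structures, the $ST$-oracle, and the $\Dist_T$ table, with the crux being the hop-length bound $|vt_v| \le \lfloor\sqrt{n}\rfloor$. If anything your derivation of that bound is slightly more careful than the paper's — the paper appeals to $\min\{|v\ell|: \ell \in \LL\} \le \lfloor\sqrt{n}\rfloor$, which by itself only controls the \emph{hop}-nearest landmark, whereas your contradiction argument (a landmark strictly interior to the shortest path $vt_v$ would be strictly closer in \emph{weight} than $t_v$, and you obtain such a landmark by applying Proposition~\ref{lem:landmark} to the subpath $vu$ that excludes $t_v$) addresses the weight-nearest vertex $t_v$ directly and closes that small gap.
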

\begin{proof}
    The size of $T$ is $\widetilde{O}(|S| + \sqrt{n})$. So the sizes of all the shortest path trees constructed in step~2 is
    $\widetilde{O}(|S|n + n^{3/2})$. Similarly,
    the size of the $ST$-oracle constructed in step~2 is $\widetilde{O}(|S|n + n^{3/2})$ (by Theorem~\ref{st-oracle}). 
    Furthermore, the data structure used to answer LCA queries on each shortest path tree ${\cal T}(u)$ has size $O(n)$. 
    Since $u \in S \cup \LL$,
    these 
    data structures also take up space $\widetilde{O}(|S|n + n^{3/2})$.
    
    It follows from the property of our landmark set $\LL$ that for any vertex $v$, we have 
    $\min\{|v\ell|: \ell \in \LL\} \le \lfloor\sqrt{n}\rfloor$.
    So for each vertex $v$, we have $||vt_v \diamond f||$ stored for at most $\lfloor\sqrt{n}\rfloor$ 
    many edges $f$, where $t_v$ is the nearest vertex in $T$ to $v$. 
    Thus the size of $\Dist_T$ is $O(n\cdot\sqrt{n}) = O(n^{3/2})$. 
    This finishes the proof of the lemma.
\end{proof}

It is easy to see that our algorithm runs in expected polynomial time.
Recall that we used randomization to construct the set $\LL$. By blowing up the size of $\LL$ by a factor of $\log n$, 
the construction of $\LL$ can be made deterministic (see \cite[Lemma~1]{BiloCG0PP18}).
Thus Theorem~\ref{lem : One layer} follows. We restate it below for convenience.
\FirstTheorem*

\begin{remark}
\label{remark1}
    The  query answering algorithm can return not only an approximate estimate of the 
    distance $||sv \diamond f||$, but also the corresponding approximate shortest 
    path in a succinct form.  It is known that any replacement path $\rho \diamond f$ is {\em 2-decomposable}, 
i.e., it is a concatenation of at most 2 shortest paths interleaved with at most 1 edge~\cite{afek-et-al-02}. 

So along
with any distance $||st_v \diamond f||$ (similarly, $||vt_v \diamond f||$), 
we could  also store the corresponding replacement paths in
2-decomposable form. Thus the query answering algorithm can return the corresponding $s$-$v$ approximate shortest path as the
union of two replacement paths $\rho_1 = st_v \diamond f$ and $\rho_2 = vt_v \diamond f$, each in 2-decomposable form, say,
$\rho_1 = \langle s,x,y,t_v\rangle$ and $\rho_2 = \langle v,x', y', t_v\rangle$.
This will mean $\rho_1$ is the shortest path in $G$ between $s$ and $x$ followed by the edge $(x,y)$, and the shortest path in $G$  between
$y$ and $t_v$, similarly for $\rho_2$.
\end{remark}

\section{A Sparser Fault-Tolerant Sourcewise Approximate Distance Oracle}
\label{sec:approx-2}

In this section, we present another fault-tolerant sourcewise approximate distance oracle for single edge faults.
As before, the input is an undirected weighted graph $G = (V,E)$ with a positive weight function $\wt: E \rightarrow \mathbb{R}_+$ 
and a subset $S \subseteq V$ of sources. Our goal is
to build a sparser data structure that can answer distance queries $Qu(s,v,f)$ within a small multiplicative stretch.
For sets $S$ of size $o(\sqrt{n})$,
the oracle in this section will be sparser than the one in Section~\ref{sec:approximate}.

\subparagraph{Our algorithm.} We will now construct {\em two} sets $\LL_1$ and $\LL_2$ of landmark vertices. 
We will first run the sampling step in Section~\ref{sec:prelims} with $p = (3\ln n)/n^{1/3}$. Let $\LL_1$
be the resulting landmark set. The following properties follow from Section~\ref{sec:prelims} (see Proposition~\ref{lem:landmark}).
\begin{itemize}
    \item $|\LL_1| = O(n^{2/3}\log n)$.
    \item For any pair of vertices $u$ and $v$: if $|uv| \ge \lfloor n^{1/3}\rfloor$ then there is at least one vertex of $\LL_1$ on $uv$.
\end{itemize}

After that, we sample each vertex of $\LL_1$ with probability $1/n^{1/3}$. Let $\LL_2$
be the set of selected vertices. 
Observe that this 2-step sampling to obtain $\LL_2$ is equivalent to running the sampling step in Section~\ref{sec:prelims} 
with $p = (3\ln n)/n^{2/3}$ on the entire vertex set $V$.
Thus the following properties follow from Section~\ref{sec:prelims} (see Proposition~\ref{lem:landmark}).
\begin{itemize}
    \item $|\LL_2| = O(n^{1/3}\log n)$.
    \item For any pair of vertices $u$ and $v$: if $|uv| \ge \lfloor n^{2/3}\rfloor$ then there is at least one vertex of $\LL_2$ on $uv$.
\end{itemize}

Rather than sampling each vertex of $V$ with probability $p = (3\ln n)/n^{2/3}$ to get $\LL_2$, we did this in two steps so that we have
$\LL_2 \subseteq \LL_1$.
Let $T_1 = \LL_1 \cup S$ and let $T_2 = \LL_2 \cup S$. Our algorithm will use the following notations for any vertex $v$.
\begin{itemize}
    \item Let $t_v$ be the vertex in $T_1$ that is nearest to $v$.
    \item Let $t'_v$ be the vertex in $T_2$ that is nearest to $v$.
\end{itemize}

For each $u \in T_2$, we will keep the shortest path tree ${\cal T}(u)$ in $G$ rooted at $u$.
However we cannot afford to keep shortest path trees rooted at each $u \in \LL_1$ since that would exceed the desired space bound.
Corresponding to each $u \in \LL_1$, let $\Ball(u) = \{v \in V: t_v =  u\}$ be the set of all vertices
    $v$ that regard $u$ as their nearest vertex in $T_1$.
    \begin{itemize}
        \item For each $v \in \Ball(u)$, we  will store the path $uv$.
        \item Thus we keep a {\em truncated} shortest path tree $\hat{\cal T}(u)  = \cup_{v\in \Ball(u)}uv$  
        in $G$ rooted at $u$ for each $u \in \LL_1$.
    \end{itemize}
    
    Along with each vertex $v \in \hat{\cal T}(u)$ where 
    $u \in \LL_1$, we also store $|uv|$, i.e., the hop length of~$uv$, and the distance $||uv||$. 
    Similarly, as done in Section~\ref{sec:approximate},
    along with each vertex $v \in {\cal T}(u)$, where $u \in T_2$, we store $|uv|$ and $||uv||$.

Below we describe the steps in our algorithm. 
\begin{enumerate}
    \item Obtain the landmark sets $\LL_1$ and $\LL_2$, where $\LL_2 \subseteq \LL_1$, as described above. 
\item For each $u \in T_2 = S \cup \LL_2$ do: compute the shortest path tree ${\cal T}(u)$ rooted at $u$ in~$G$.
    \item Use Theorem~\ref{st-oracle} to construct an $ST$-exact distance oracle for the given source set $S$ 
    and target set $T = T_2$ in $G = (V,E)$. 
 \item For each $u \in \LL_1$ do: compute the {\em truncated} shortest path tree $\hat{\cal T}(u)$  rooted at $u$ in~$G$.
     \item For every $v \in V$ do: 
    \begin{enumerate}
        \item Let $t_v \in T_1 = S \cup \LL_1$ be the vertex in $T_1$ that is nearest to $v$.
        \item For $1 \le i \le |vt_v|$ do: 
        \begin{itemize}
            \item Set $\Dist_1[v,i] = ||vt_v \diamond f_i||$ where $f_i$ is the $i$-th edge from $t_v$ on $vt_v$.
        \end{itemize}
    \end{enumerate}
    \item For every $u \in T_1$ do:
    \begin{enumerate}
        \item Let $t'_u \in T_2$ be the vertex in $T_2$ that is nearest to $u$.
        \item For $1 \le j \le |ut'_u|$ do:
        \begin{itemize}
        \item Set $\Dist_2[u,j] = ||ut'_u\diamond f_j||$ where $f_j$ is the $j$-th edge from $t'_u$ on $ut'_u$.
        \end{itemize}
    \end{enumerate}
\end{enumerate}

Observe that the array $\Dist_1[v,i]$ stores for any vertex $v$, the distance $||vt_v \diamond f||$ where $f$ is the
$i$-th edge from $t_v$ on the path $vt_v$.
Similarly, the array $\Dist_2[u,j]$ stores for any vertex $u \in T_1$, the distance $||ut'_u \diamond f||$ where $f$ is the
$j$-th edge from $t'_u$ on the path $ut'_u$.

\subparagraph{The query answering algorithm.}
In response to the query $Qu(s,v,f)$, the query answering algorithm first checks if 
$f \in sv$. As described in Section~\ref{sec:approximate}, this is done by checking the answers to some LCA queries
in ${\cal T}(s)$. Let us assume $f \in sv$, otherwise 
the query answering algorithm will return $||sv||$. 

Then the query answering algorithm looks up $x = t_v$ and $y = t'_x$.
In more detail, (i)~$x$ is the closest vertex to $v$ in $T_1$ and
(ii)~$y$ is the closest vertex to $x$ in $T_2$. 
The query answering algorithm needs to know if $f \in vx$ or not; if so, it also needs to know the index $i \in \{1,\ldots,n^{1/3}\}$ such that 
$f$ is the $i$-th edge on $xv$. As described in Section~\ref{sec:approximate}, we can decide if  $f \in vx$ or not via LCA
queries on the truncated shortest path tree $\hat{\cal T}(x)$. If so, we can also obtain from $\hat{\cal T}(x)$
the value $i$ such that 
$f$ is the $i$-th edge  from $x$ on $xv$.

Thus the query answering algorithm knows in $O(1)$ time 
whether $f \in xv$ or not and if so, the index $i$ such that $f$ is the $i$-th edge from $x$ on $xv$. 
\begin{itemize}
    \item If $f \notin vx$ then $||vx \diamond f|| = ||vx||$;
    else $||vx \diamond f|| = \Dist_1[v,i]$.
\end{itemize}

Recall that we compute ${\cal T}(u)$ for all $u \in T_2$. Thus, as described in Section~\ref{sec:approximate}, 
we can efficiently check if $f \in xy$ or not;  if so, the algorithm also knows
the index $j$ such that $f$ is the $j$-th edge from $y$ on the path $xy$. 
\begin{itemize}
   \item If $f \notin xy$ then $||xy \diamond f|| = ||xy||$;
   else $||xy\diamond f|| = \Dist_2[x,j]$.
\end{itemize}

Since $s \in S$ and $y \in T_2$ (recall that $T = T_2$), the distance $||ys \diamond f||$ is obtained by querying 
the $ST$-distance oracle. Thus the query answering algorithm can obtain $||vx\diamond f||,
||xy \diamond f||$, and $||ys \diamond f||$ in $O(1)$ time.
In response to the query $Qu(s,v,f)$, the query answering algorithm returns $||vx\diamond f|| + ||xy \diamond f|| + ||ys \diamond f||$. 

We will show in Lemma~\ref{lem:new-source-correct} that our
$s$-$v$ distance estimate in $G - f$ is at most $13||sv \diamond f||$. 

\begin{lemma}
    \label{lem:new-source-correct}
For any $(s,v) \in S \times V$ and $f \in E$, our algorithm returns an $s$-$v$ distance estimate with stretch $\le 13$
in $G-f$  in $O(1)$ time.    
\end{lemma}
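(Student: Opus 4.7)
The plan is to follow the three-segment case analysis suggested by the query algorithm, in the spirit of Lemma~\ref{lem:source-correct}. Writing $x = t_v$, $y = t'_x$, and $D = ||sv \diamond f||$, the procedure returns $||sv|| = D$ when $f \notin sv$ (stretch $1$), and $||vx \diamond f|| + ||xy \diamond f|| + ||ys \diamond f||$ when $f \in sv$. I assume henceforth that $f \in sv$. Three baseline inequalities from $t_v$ and $t'_x$ being nearest in $T_1$ and $T_2$ will be used repeatedly: $||vx|| \leq ||vs|| \leq D$, $||xy|| \leq ||xs|| \leq ||xv|| + ||vs|| \leq 2D$, and $||ys|| \leq ||yx|| + ||xs|| \leq 4D$.

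For $||vx \diamond f||$ I reuse the LCA argument from Lemma~\ref{lem:source-correct}. When $f \notin vx$ the bound is trivial: $||vx \diamond f|| = ||vx|| \leq D$. When $f \in vx$, Case~1 of that proof (applied in ${\cal T}(x)$) shows that $f \in sx$ together with $f \in vx$ would force $f \notin sv$, so $f \notin sx$; the walk $v \to s \to x$ in $G - f$, using $sv \diamond f$ followed by the $f$-free path $sx$, yields $||vx \diamond f|| \leq D + ||sx|| \leq 3D$.

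The technical heart of the proof is bounding $||xy \diamond f||$ when $f = (a,b) \in xy$. Taking $a$ to be the endpoint of $f$ closer to $x$ along $xy$, I decompose $xy = xa + ab + by$ and splice in the detour $a \to s \to v \to b$ for the middle edge: here $sa$ and $bv$ are both subpaths of $sv$ (and therefore avoid $f$), while the middle leg is $sv \diamond f$. The resulting walk in $G - f$ has length
\[
||xa|| + ||as|| + D + ||vb|| + ||by|| \;=\; D + (||xa|| + ||by||) + (||as|| + ||vb||).
\]
Two telescoping identities finish the bound: $||xa|| + ||by|| = ||xy|| - \wt(f) \leq 2D$ and $||as|| + ||vb|| = ||sv|| - \wt(f) \leq D$, yielding $||xy \diamond f|| \leq 4D$. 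When $f \notin xy$ we trivially have $||xy \diamond f|| = ||xy|| \leq 2D$.

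For $||ys \diamond f||$ I argue in the same style: if $f \notin ys$ then $||ys \diamond f|| = ||ys|| \leq 4D$; otherwise, letting $p \in \{a,b\}$ be the $y$-side endpoint of $f$ along $ys$, the walk $y \to p \to s$ in $G - f$ (with $p \to s$ going directly along $as$ when $p = a$, and via $v$ using $sv \diamond f$ when $p = b$) has length at most $||ys|| + (||sv|| + D) \leq 6D$. Combining, $||vx \diamond f|| + ||xy \diamond f|| + ||ys \diamond f|| \leq 3D + 4D + 6D = 13D$, as claimed. The $O(1)$ query time is inherited from Lemma~\ref{lem:source-correct}: deciding $f \in vx$, $f \in xy$, and $f \in ys$ reduces to $O(1)$ LCA look-ups in ${\cal T}(s)$, $\hat{\cal T}(x)$, and ${\cal T}(y)$, after which a lookup in $\Dist_1$ or $\Dist_2$, or an invocation of the $ST$-oracle from Theorem~\ref{st-oracle}, returns each replacement distance. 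The main obstacle is the $||xy \diamond f||$ bound; without the two telescoping identities above, a naive triangle-inequality attack would push the stretch well beyond $13$.
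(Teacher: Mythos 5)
Your proof is correct in substance and takes a genuinely different route from the paper's. The paper's proof splits into eight cases according to which of the three paths $vx$, $xy$, $ys$ contain $f$, uses Claim~\ref{claim1}-style triangle inequalities and LCA arguments inside each case, and shows two of the combinations cannot occur; the worst cases (4, 7, 8 there) each land at stretch~$13$. You instead bound each of the three returned quantities \emph{independently} via a universal ``splice through $sv$ or $sv\diamond f$'' construction: $||vx\diamond f|| \le 3D$, $||xy\diamond f||\le 4D$, $||ys\diamond f|| \le 6D$, summing to $13D$ only in the all-faulty case. This avoids the case explosion and the impossibility arguments entirely, and in several of the paper's intermediate cases your per-term bounds are strictly tighter (e.g.\ in the paper's Case~4 your bounds give $3D+2D+4D = 9D$ rather than $13D$). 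The telescoping identities $||xa||+||by|| = ||xy||-\wt(f)$ and $||as||+||vb|| = ||sv||-\wt(f)$ are the key technical ingredient that makes the unified bound work, and they have no direct analogue in the paper's argument.

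One small gap to patch: when bounding $||xy\diamond f||$ you take $a$ to be the endpoint of $f$ nearer $x$ along $xy$ and then assert that $sa$ and $bv$ are subpaths of $sv$, which amounts to assuming that this same $a$ is also the endpoint nearer $s$ along $sv$. There is no a priori reason the two orientations agree. If they disagree, the detour must be routed the other way around ($x\to a \to v \to (sv\diamond f)\to s\to b\to y$); the lengths telescope identically and the bound $4D$ is unchanged, so the claim survives, but the argument as written silently commits to one of two symmetric sub-cases. The same alignment choice then feeds into your $||ys\diamond f||$ case split on $p\in\{a,b\}$. A one-line WLOG (``relabel $a,b$ so that $a$ lies between $s$ and $b$ on $sv$'') together with a remark that the $xy$ decomposition may place either endpoint next to $x$, handled symmetrically, would close this.
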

\begin{proof}
Suppose the query is $Qu(s,v,f)$. If $f \notin sv$ then the algorithm returns $||sv||$, thus the stretch is 1 in this case.
So assume $f \in sv$. Then the query answering algorithm returns $||vx\diamond f|| + ||xy\diamond f|| + ||ys\diamond f||$, where 
$x = t_v$ and $y = t'_x$. 
Let us bound the stretch.

We need to compare the sum $||vx\diamond f|| + ||xy\diamond f|| + ||ys\diamond f||$
with $||sv\diamond f||$. Let us first show the following claim.
\begin{claim}
\label{claim1}
    We have (i)~$||vx|| \le ||vs||$, (ii)~$||xy|| \le 2||vs||$, and (iii)~$||ys|| \le 4||vs||$.
\end{claim}
\begin{claimproof}
    It follows from the definition of $T_1 = \LL_1 \cup S$ that both $x$ and $s$ are in $T_1$. Since $x$ is the nearest vertex in $T_1$ to $v$,
    we have $||vx|| \le ||vs||$.
    Recall that $y = t'_x$. Since $y$ is the closest vertex in $T_2$ to $x$, we have $||xy|| \le ||xt'_v||$, i.e., the $x$-$y$ distance is
    at most the distance between $x$ and $t'_v$ (recall that $t'_v$ is the nearest vertex in $T_2$ to $v$). 
    Furthermore, $||xt'_v|| \le ||xv|| + ||vt'_v||$. 
    
    Observe that both $||xv||$ and $||vt'_v||$ are
    at most $||sv||$ since $s \in T_1 \cap T_2$, so $v$'s distance to its nearest vertex in $T_1$ and also in $T_2$ is at most $||sv||$. 
    Thus $||xy|| \le 2||sv||$.
    So we have $||ys|| \le ||sv|| + ||vx|| + ||xy|| \le ||sv|| + ||sv|| + 2||sv|| = 4||sv||$.
\end{claimproof}

We are now ready to bound $||vx\diamond f|| + ||xy\diamond f|| + ||ys\diamond f||$.
There are 8 cases depending on the presence of edge $f$ on various shortest paths. 

\begin{enumerate}
    \item $f \notin vx$ and $f \notin xy$ and $f \notin ys$. 
    Then the algorithm returns $||vx|| + ||xy|| + ||ys||$.
    \begin{itemize}
        \item It immediately follows from Claim~\ref{claim1} that the $s$-$v$ distance estimate returned in this case is at most $7||sv|| \le 7||sv \diamond f||$. 
      \end{itemize}  
\medskip
    \item $f \notin vx$ and $f \notin xy$ and $f \in ys$. Then the algorithm returns  $||vx|| +
    ||xy|| + ||ys \diamond f||$. Since the failed edge $f$ belongs to neither $vx$ nor $xy$, 
    we have $||sy\diamond f|| \le ||sv\diamond f|| + ||vx|| + ||xy||$. 
    We have $||vx|| + ||xy|| \le 3||sv||$ (by Claim~\ref{claim1}).
    \begin{itemize}
        \item Thus the $s$-$v$ distance estimate returned in this case is at most 
        $||sv\diamond f|| + 6||sv|| \le 7||sv\diamond f||$ (by Claim~\ref{claim1}).
    \end{itemize} 
\medskip
    \item $f \notin vx$ and $f \in xy$ and $f \notin ys$. Then the algorithm returns $||vx||  + ||xy\diamond f||
    + ||ys||$. 
    Observe that $G-f$ has an $x$-$y$ path of length at most $||xv|| + ||vs \diamond f|| + ||sy||$. Since $||ys|| \le 4||vs||$
    (by Claim~\ref{claim1}),
    this $x$-$y$ path in $G-f$ is of length at most $||sv|| + ||sv\diamond f|| + 4||sv|| = 5||sv|| + ||sv\diamond f||$. 
    \begin{itemize}
        \item Using Claim~\ref{claim1} to bound $||vx||$ and  $||ys||$, the $s$-$v$
    distance estimate returned in this case is at most $||sv|| + 5||sv|| +  ||sv\diamond f|| + 4||sv|| = 10||sv|| + ||sv \diamond f|| \le 11||sv\diamond f||$.
    \end{itemize}
\medskip    
    \item $f \in vx$ and $f \notin xy$ and $f \notin ys$. Then the algorithm returns $||vx\diamond f||+||xy||
    + ||ys||$. Observe that $G-f$ has a $v$-$x$ path of length at most $||vs\diamond f|| + ||sy|| + ||yx||$. 
    This is of length at most $||sv\diamond f|| + 4||sv|| + 2||sv|| = 6||sv|| + ||sv\diamond f||$. 
    \begin{itemize}
        \item Using Claim~\ref{claim1} to bound $||xy||$ and $||ys||$,
    the $s$-$v$ distance estimate returned in this case is at most $||sv\diamond f|| + 6||sv|| + 2||sv|| + 4||sv|| =  12||sv|| + ||sv \diamond f|| \le 13||sv\diamond f||$.
    \end{itemize}

\medskip
 
   \item $f \notin vx$ and $f \in xy$ and $f \in ys$.  Consider the shortest path tree ${\cal T}(x)$ rooted at $x$ in $G$.
   Since $f \notin vx$ and $f \in xy$, the edge $f \in wy$ where $w = \LCA_{{\cal T}(x)}(v,y)$. 
   But the edge $f$ also belongs to $ys$ and $sv$ --- this is not possible (see Figure~\ref{fig5}). 
   Thus this case cannot arise.

\medskip

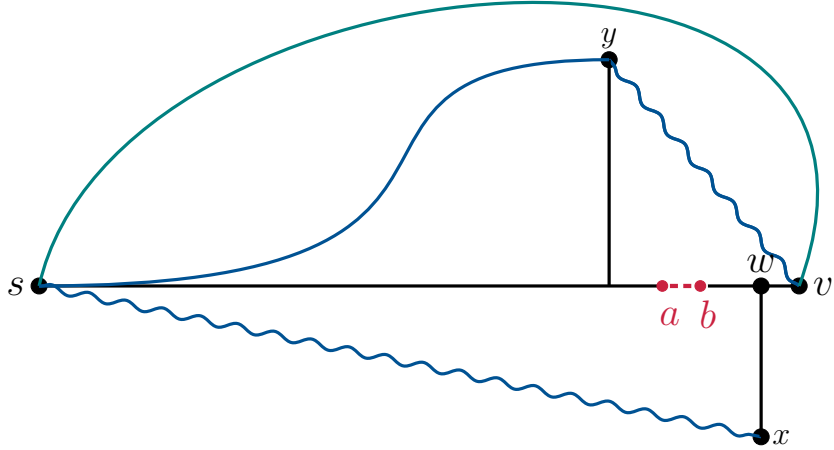
\begin{figure}[h]
\usetikzlibrary{decorations.pathmorphing}
\tikzstyle{vertex} = [circle, draw=black, fill=black, inner sep=0pt,  minimum size=5pt]
\tikzstyle{edgelabel} = [circle, fill=white, inner sep=0pt,  minimum size=15pt]
\centering
\begin{tikzpicture}[scale=1, transform shape]

       \coordinate (s) at (-5,0);
        \coordinate (t) at (5,0);
        \filldraw [black] (s) circle(3pt);
         \filldraw [black] (t) circle(3pt);
         \draw [very thick] (s) -- (3.2,0);
         \draw [very thick] (t) -- (3.8,0);
         \node[left, scale=1.25] at (s) {$s$};
         \node[right, scale=1.25] at (t) {$v$};

          \filldraw [black] (2.5,3) circle(3pt);
          \filldraw [black] (4.5,-2) circle(3pt);
           \node[above] at (2.5,3) {$y$};
           \node[right] at (4.5,-2) {$x$};
           
           \draw[very thick, blue, decorate, decoration={snake, amplitude=0.5mm, segment length=5mm}] (t) -- (2.5,3);
           
           \draw[very thick, blue, decorate, decoration={snake, amplitude=0.5mm, segment length=5mm}] (t) -- (2.5,3);
           
           \draw[very thick, blue, decorate, decoration={snake, amplitude=0.5mm, segment length=5mm}] (s) -- (4.5,-2);

           \path [very thick,draw,teal] (s) .. controls (-4,4.5) and (7,5.5) ..  (t);
           \path [very thick,draw] (2.5,0) --  (2.5,3);
           \path [very thick,draw] (4.5,0) --  (4.5,-2);
           
           \path [very thick,draw] (4.5,-2) --  (4.5,-2);
       
             \draw [very thick, blue] (2.5,3) .. controls (-2,3) and (2,0) .. (-5,0);

             \node[above, scale = 1.25] at (4.5,0) {$w$};
              \filldraw [black] (4.5,0) circle(3pt);        
                    \node[below, scale = 1.25, red] at (3.3,-0.1) {$a$};
         \node[below, scale = 1.25, red] at (3.8,0) {$b$};
           \draw [ultra thick, densely dashed, red] (3.3,0) -- (3.8,0);
                     \filldraw [red] (3.2,0) circle(2pt);
          \filldraw [red] (3.7,0) circle(2pt);

\end{tikzpicture}
\caption{The edge $f = (a,b)$ is supposed to be in the paths $sv, xy$, and $ys$, but not in $vx$. Here $w = \LCA_{{\cal T}(x)}(v,y)$ where 
${\cal T}(x)$ is the shortest path rooted at $x$ in $G$.}
\label{fig5}
\end{figure}

\medskip   
    \item $f \in vx$ and $f \notin xy$ and $f \in ys$. Consider  the shortest path tree 
    ${\cal T}(s)$ rooted at $s$ in $G$ and let $z = \LCA_{{\cal T}(s)}(v,x)$. 
    Since $f \in sv$ and $f \in vx$, it follows that $f \in zv$. Hence $f \notin sx$.
   Thus there is a $v$-$x$ path in $G-f$ of length $||vs\diamond f|| + ||sx||$. Since $||sx|| \le ||sv|| + ||vx|| \le 2||sv||$, this $v$-$x$ path in $G-f$ has length $||sv\diamond f|| + 2||sv||$. 
   
  \smallskip
   
   Since $f \in sv$ and $f \in sy$, the edge $f \in sr$ 
   where $r = \LCA_{{\cal T}(s)}(v,y)$.
   Thus the edge $f$ does not belong to the path $v$-$r$-$y$ in ${\cal T}(s)$. 
   Hence there is an $s$-$y$ path in $G-f$ of length 
   $||sv\diamond f|| + ||vs|| + ||sy|| \le ||sv\diamond f|| + ||vs|| + 4||sv||$ (by Claim~\ref{claim1}).
   \begin{itemize}
       \item Thus $G-f$ has  an $s$-$y$ path of length $||sv\diamond f|| + 5||vs||$, plus a $v$-$x$ path of length $||sv\diamond f|| + 2||vs||$.
   Since $||xy|| \le 2||vs||$, the $s$-$v$ distance estimate returned in this case is at most
   $2||sv\diamond f|| + 9||vs|| \le 11||sv \diamond f||$.
   \end{itemize}
   
\medskip       

    \item $f \in vx$ and $f \in xy$ and $f \notin ys$. As seen in case~6, 
    there is a $v$-$x$ path in $G-f$ of length $||sv\diamond f|| + 2||vs||$. 
    Moreover, since $f \in vx$ and $f \in xy$, the edge
   $f \in xw$ where $w = \LCA_{{\cal T}(x)}(v,y)$. Hence the edge $f$ does not belong to the path $v$-$w$-$y$ in ${\cal T}(x)$. 

\smallskip
    
    Thus there is a $v$-$y$ path in $G-f$ of length at most $||vx|| +||xy|| \le 3||sv||$. Hence there is an $x$-$v$-$y$ path of
    length at most $||sv\diamond f|| + 2||vs|| + 3||sv|| = ||sv\diamond f|| + 5||vs||$. 
    \begin{itemize}
        \item  So the $s$-$v$ distance estimate returned in this case is at most $||sy|| + (||sv\diamond f|| + 5||vs||) + (||sv\diamond f|| + 2||vs||)$. Since $||sy|| \le 4||sv||$, this is at most $2||sv\diamond f|| + 11||sv|| \le 13||sv\diamond f||$.
    \end{itemize}

\medskip

    \item $f \in vx$ and $f \in xy$ and $f \in ys$. As seen in case~7, there is a $v$-$x$ path in $G-f$ of length 
    $||sv\diamond f|| + 2||vs||$ and there is an $x$-$y$ path of length at most $||sv\diamond f|| + 5||vs||$. It also
    follows from case~7 that there is a $v$-$y$ path in $G-f$ of length at most $3||sv||$, thus there is an
    $s$-$y$ path in $G-f$ of length at most $||sv\diamond f|| + 3||vs||$. 
    \begin{itemize}
        \item  So the $s$-$v$ distance estimate returned in this case is at most 
    $3||sv\diamond f|| + 10||sv|| \le 13||sv\diamond f||$.
    \end{itemize}
\end{enumerate}
Thus the stretch of our approximate distance oracle is at most 13. We have already seen that the query answering time is $O(1)$. This finishes the proof of the lemma.
\end{proof}

 \subparagraph{Size of the oracle.}    
 We show below in Lemma~\ref{lem:new-source-size} that the space taken up by the
    data structures constructed in all the steps of our algorithm is
    $\widetilde{O}(n^{4/3} + |S|n)$.

\begin{lemma}
\label{lem:new-source-size}
The space needed to store all the data structures constructed by our algorithm is $\widetilde{O}(n^{4/3} + |S|n)$.
\end{lemma}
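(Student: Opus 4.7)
The plan is to account for each of the data structures built in the six steps of the algorithm and add up the contributions, grouping them as: (i)~the full shortest path trees and $ST$-oracle on $T_2$ from steps~2--3, (ii)~the truncated trees $\hat{\cal T}(u)$ for $u \in \LL_1$ and the table $\Dist_1$ from steps~4--5, and (iii)~the table $\Dist_2$ from step~6.

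Group~(i) follows immediately from the cardinality bound $|T_2| = |S| + O(n^{1/3}\log n)$ and the statement of Theorem~\ref{st-oracle}. Each tree ${\cal T}(u)$ together with its hop/length annotations and LCA data structure occupies $O(n)$ space, so step~2 contributes $\widetilde{O}(|S|n + n^{4/3})$ across all $u \in T_2$; Theorem~\ref{st-oracle} applied to source set $S$ and target set $T_2$ then gives another $O((|S|+|T_2|)n\log n) = \widetilde{O}(|S|n + n^{4/3})$ for step~3.

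The only step that requires real care is establishing the hop-length bounds $|v t_v| \le \lfloor n^{1/3}\rfloor$ for every $v \in V$ and $|u t'_u| \le \lfloor n^{2/3}\rfloor$ for every $u \in T_1$, since the sizes of the truncated trees and the two $\Dist$ tables hinge on these. I would argue as follows, mirroring the reasoning used implicitly in Lemma~\ref{lem:source-size}: assume $|v t_v|$ exceeds the threshold, apply Proposition~\ref{lem:landmark} to the sub-path of $v t_v$ of hop length exactly $\lfloor n^{1/3}\rfloor$ starting at $v$ to extract a landmark $\ell \in \LL_1 \subseteq T_1$ on this sub-path; since shortest paths are unique, $\ell \neq t_v$ would force $\|v\ell\| < \|v t_v\|$, contradicting the minimality of $t_v$ within $T_1$, so $\ell = t_v$ and the hop bound holds. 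The same argument applied with $\LL_2$ at sampling rate $(3\ln n)/n^{2/3}$ yields the bound on $|u t'_u|$.

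Once these bounds are in hand, group~(ii) follows because the balls $\{\Ball(u) : u \in T_1\}$ partition $V$: summing $|uv|$ over $v \in \Ball(u)$ and $u \in \LL_1$ telescopes to $\sum_{v \in V} |v t_v| = O(n \cdot n^{1/3}) = O(n^{4/3})$, and the same bound governs the LCA structures on the truncated trees and the table $\Dist_1$ (one entry per edge on $v t_v$). Group~(iii) is bounded by $\sum_{u \in T_1} |u t'_u| \le (|S| + O(n^{2/3}\log n)) \cdot n^{2/3} = \widetilde{O}(|S|n^{2/3} + n^{4/3})$, which is absorbed into $\widetilde{O}(|S|n + n^{4/3})$. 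Adding the contributions from all three groups yields the claimed space bound.
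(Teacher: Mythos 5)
Your proposal is correct and follows essentially the same accounting as the paper's proof: the full trees and $ST$-oracle on $T_2$ give $\widetilde{O}(|S|n + n^{4/3})$, the truncated trees plus $\Dist_1$ give $O(n^{4/3})$ via the partition of $V$ into balls, and $\Dist_2$ gives $\widetilde{O}(|S|n^{2/3} + n^{4/3})$. You spell out more explicitly than the paper does why Proposition~\ref{lem:landmark} yields the hop-length bounds $|vt_v|\le\lfloor n^{1/3}\rfloor$ and $|ut'_u|\le\lfloor n^{2/3}\rfloor$ (the paper just cites the proposition); your argument is sound, though the final clause ``so $\ell = t_v$ and the hop bound holds'' is oddly phrased --- since the landmark $\ell$ lies on a \emph{proper} prefix of $vt_v$ under the assumption $|vt_v|>\lfloor n^{1/3}\rfloor$, one always has $\ell\neq t_v$ and $\|v\ell\|<\|vt_v\|$, which directly refutes the assumption; there is no case in which $\ell=t_v$ to fall back on.
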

\begin{proof}
   The space taken by the truncated shortest path trees $\hat{\cal T}(u)$ for all $u \in \LL_1$ is $O(\sum_{v\in V}|vt_v|)$.
    Observe that $|vt_v| \le n^{1/3}$ (by Proposition~\ref{lem:landmark}). 
    Thus $O(\sum_v|vt_v|) = O(n^{4/3})$. 
    Similarly the space taken up by ${\cal T}(t)$ for all $t \in T_2$ is $O(n^{4/3}\log n + |S|n)$ since 
    $|T_2| = |\LL_2| + |S|$ and $|\LL_2|$ is $O(n^{1/3}\log n)$. 
   The size of the $ST$-oracle (where $T = T_2$) is also $\widetilde{O}(n^{4/3} + |S|n)$
   (by  Theorem~\ref{st-oracle}).
    
    For each vertex $v$, we store $t_v$ and $t'_v$ -- these are the nearest vertices to $v$ in $T_1$ and $T_2$, 
    respectively. 
  For all edges $f \in vt_v$, we store 
    $||vt_v\diamond f||$ in the data structure $\Dist_1$. We  have $|vt_v| \le n^{1/3}$ (by Proposition~\ref{lem:landmark}). 
    Thus the space taken by the data structure $\Dist_1$ to store the distances $\Dist_1[v,i]$ where $v \in V$ and $1 \le i \le n^{1/3}$  is at most $n^{4/3}$. 
    
    For all edges $f \in ut'_u$, where $u \in T_1$, the data structure $\Dist_2$ stores $||ut'_u\diamond f||$.
    For any vertex $u$, we have $|ut'_u| \le n^{2/3}$ (by Proposition~\ref{lem:landmark}). 
    Thus the space taken by $\Dist_2$ to store the distances $\Dist_2[u,i]$ where $u \in T_1$ and $1 \le i \le n^{2/3}$ is 
    $|T_1|\cdot n^{2/3} = O((n^{2/3}\log n + |S|)\cdot n^{2/3})$, which is $O(n^{4/3}\log n + |S|n^{2/3})$. Thus the entire
    space taken up by all the data structures is $\widetilde{O}(n^{4/3} + |S|n)$.
\end{proof}

It is easy to see that our algorithm runs in expected polynomial time.
As mentioned at the end of Section~\ref{sec:approximate}, by blowing up the sizes of $\LL_1$ and $\LL_2$ 
by a factor of $\log n$, their construction can be made deterministic as stated in \cite[Lemma~1]{BiloCG0PP18}.
Moreover, we can easily ensure that $L_2 \subseteq L_1$. Thus Theorem~\ref{lem : two layers} follows. 
We restate it below for convenience.
\SecondTheorem*

As mentioned in Remark~\ref{remark1}, along with every distance in $\Dist_1$ and $\Dist_2$, 
we could also store the corresponding
replacement paths in 2-decomposable form. Thus along with the approximate $s$-$v$ distance, 
the query answering algorithm can also return the
approximate path between $s$ and $v$ as the union of 3 paths, each in 2-decomposable form.

\section{Concluding Remarks}
\label{sec:conclusions}
Fault-tolerant approximate distance oracles that maintain approximate distances for all pairs of vertices 
have been well-studied. Fault-tolerant single source and multiple source 
{\em exact} distance oracles have also been studied. As mentioned in~\cite{BiloCG0PP18}, given a subset $S \subseteq V$, 
for the problem of storing $||sv \diamond f||$ where $(s,v) \in S \times V$ 
and $f \notin E$ is allowed\footnote{We thank a reviewer for pointing out this subtlety to us.} 
(this is interpreted the same as if no edge has failed), using standard tools, 
it can be shown that there are $n$-vertex graph families, for which any representation that
allows for the return of all the $S \times V$ post-failure distances must have size $\Omega(n^{3/2}\sqrt{|S|})$.
This motivates the study of sparser data structures
that maintain {\em approximate} distances for all pairs in $S \times V$ under the failure of any $f \in E$. 
Such a data structure is a fault-tolerant sourcewise approximate distance oracle.

We showed two such oracles with constant query answering time: one of size $\widetilde{O}(|S|n + n^{3/2})$ and stretch~5 and another of size $\widetilde{O}(|S|n + n^{4/3})$ and stretch~13.  
Upon query $Qu(s,v,f)$ where $f \notin E$, it turns out that both
our query answering algorithms return the original distance $||sv||$ as if no edge has failed.
There are several interesting open problems:
\begin{itemize}
    \item Are there approximate sourcewise distance oracles of size $\widetilde{O}(|S|n + n^{1+1/k})$ and stretch $8k-3$ with 
    $O(1)$ query answering time for all integers $k \ge 1$? Our constructions showed such oracles for $k = 1,2$.
    \item The study of fault-tolerant exact as well as approximate distance oracles has so far considered structured subsets of $V \times V$ such as
    $S \times T$. 
    Is there a sparse fault-tolerant exact or approximate distance oracle for an arbitrary subset ${\cal P}$ of $V \times V$?
\end{itemize}

\bibliographystyle{plain}  
\bibliography{references} 
\end{document}